\newcommand{\per}{\mathrm{per}}
\newcommand{\fl}[1]{\mathrm{fl}{\left[#1 \right]}}
\newcommand{\norm}[1]{\left\lVert#1\right\rVert}
\newcommand{\abs}[1]{\left\lvert#1\right\rvert}
\declaretheorem[name=Theorem]{theorem}
\declaretheorem[name=Lemma]{lemma}
\theoremstyle{remark}
\declaretheorem[name=Remark]{remark}
\title[Numerical Method for the Left Tail of Sums of Independent RVs]{A Fast and Accurate Numerical Method for the Left Tail of Sums of Independent Random Variables}
\author{Nadhir Ben Rached}
\address{School of Mathematics, University of Leeds, United Kingdom}
\email{n.benrached@leeds.ac.uk}
\author{H{\aa}kon Hoel}
\address{Department of Mathematics, University of Oslo, Norway}
\email{haakonah@math.uio.no}
\author{Johannes Vincent Meo$^\star$}
\address{Department of Mathematics, University of Oslo, Norway}
\email{johannvm@math.uio.no}
\thanks{$^\star$Corresponding author: Johannes Vincent Meo (johannvm@math.uio.no)}
\keywords{Rare events, discrete convolution, Newton--Cotes rules, FFT}
\subjclass[2020]{60E05, 65G50, 90-04}
\begin{document}
\begin{abstract}
We present a flexible, deterministic numerical method for computing left-tail rare events of sums of non-negative, independent random variables. The method is based on iterative numerical integration of linear convolutions by means of Newtons--Cotes rules. The periodicity properties of convoluted densities combined with the Trapezoidal rule are exploited to produce a robust and efficient method, and the method is flexible in the sense that it can be applied to all kinds of non-negative continuous RVs. We present an error analysis and study the benefits of utilizing Newton--Cotes rules versus the fast Fourier transform (FFT) for numerical integration, showing that although there can be efficiency-benefits to using FFT, Newton--Cotes rules tend to preserve the relative error better, and indeed do so at an acceptable computational cost.
Numerical studies on problems with both known and unknown rare-event probabilities showcase the method's performance and support our theoretical findings.  
\end{abstract}
\maketitle

\section{Introduction}

For a sequence of non-negative and independent continuous random variables (RVs) $X_1, X_2, \ldots, X_n$, we seek to estimate the probability of failure
\[
\alpha = \mathbb{P} \left (\sum_{i=1}^n X_i <\gamma \right )    
\]
with low relative error for small values of $\gamma>0$. Left-tail rare-event problems of this kind are used for instance to estimate the outage probability in wireless communications, as is described further in the next paragraph. Our quite straightforward deterministic method approximates the density of $\sum_{i=1}^n X_i$ through iterative numerical convolution, and we show both theoretically and in numerical experiments that this approach is robust and that it performs very well in terms of accuracy and efficiency. 

The main inspiration for our approach is Keich~\cite{keich2005sfft,wilson2016accurate}, where a similar approach is studied for approximating the density of a sum of independent and identically distributed (i.i.d.) discrete RVs. The approaches are similar in how rounding errors propagate in computations of linear convolutions and both can be combined with the fast Fourier transform to speed up computations of linear convolutions, at the price of introducing more rounding errors. 
A notable difference is that for continuous RVs, the periodicity of convolutions of densities can be utilized to produce high-order convergence rates in the numerical integration. It is not clear if periodicity can be exploited to improve tractability also for discrete RVs.

Right and left tails of sums of RVs have gained significant attention in the literature due to their broad range of applications. In financial engineering, for example, the value at risk for a portfolio based on multiple assets can be represented as the cumulative distribution function (CDF) of sums of RVs \cite{Asmussen16}. In the performance analysis of wireless communication systems, the outage probability/capacity can be expressed as the CDF of sums of RVs corresponding to either the fading channel envelopes or channel gains \cite{7328688,1275712}. Further applications extend to insurance risk and queueing systems. Within the Cramer-Lundberg model, the total sum of claims is modeled by a random sum of independent RVs, and the ruin probability is defined as the probability that this sum exceeds a large threshold \cite{asmussen2007stochastic}.

Generally, a closed-form expression of the CDF of sums of RVs does not exist for most of the distributions. This is for instance the case
for the Log-Normal distribution which has attracted a substantial
interest
\cite{Asmussen16,FURMAN2020120,10.3150/14-BEJ665,7328688,1275712,9014029,9149127,9367405,ben2023state,asmussen2007stochastic}. Although
several closed-form approximations have been devised, their accuracy
is not guaranteed and may degrade for specific parameter choices
\cite{1275712,9149127,9367405,1578407,6503436,1369233,4781943,8737752,7145970,poly}. Furthermore,
these closed-form approximations are not generic, as they are
generally tailored to specific distributions. Efficient numerical
methods have been proposed in the literature to approximate the
distributions of sums of RVs
\cite{5282371,7378408,5425708,4471924,FURMAN2020120,757200,5464256,4917505}. For
instance, in \cite{5282371}, Smolyak's algorithm, belonging to the
family of numerical integration methods on sparse grids, has been
developed for the accurate analysis of correlated Log-Normal power
sums. Convenient numerical methods for Log-Normal characteristic
functions have also been proposed, as seen in
\cite{4471924,5425708,5464256}. In \cite{7378408}, the authors used a
saddle point approximation to evaluate the outage probability of
wireless cellular networks. However, this method assumes the existence
of the cumulant generating function, a requirement that is not met by
many practical distributions, including the Log-Normal distribution. A
general numerical approach, presented in \cite{4917505}, has also been
developed for computing wireless outages. Similar to \cite{7378408},
this approach is general, provided that the moment generating function is known.

Monte Carlo (MC) methods are versatile tools employed to provide
approximations of the CDF or
complementary CDF of sums of RVs. However, it is widely recognized
that naive MC simulations are computationally expensive, especially
when addressing the right and left tails of sum distributions
\cite{kroese2011handbook}. To mitigate this computational
inefficiency, various efficient variance reduction techniques have
been proposed. While much of the existing research has concentrated on
the right tail of the sum distribution
\cite{asmussen2007stochastic,6376313,7857009,ben2018generalization,juneja2002simulating,asmussen2011efficient,asmussen2006improved,asmussen2012error,asmussen1997simulation},
the left tail region, which is the primary focus of this work, has
only recently gained attention. In \cite{Asmussen16}, the authors utilized the exponential twisting
technique, a well-known importance sampling scheme, to efficiently
estimate the left tail of sums of i.i.d. Log-Normal RVs. Additionally, in \cite{7328688},
two generic importance sampling schemes based on the hazard rate
twisting technique of \cite{juneja2002simulating} were proposed to
estimate the tail of the CDF of independent RVs. These algorithms have
proven to be efficient for a wide range of well-known distributions
within the context of wireless communication systems. An efficient
importance sampling scheme has also been developed for the left tail
of correlated Log-Normals \cite{10.3150/14-BEJ665}. This scheme was
further enhanced in \cite{AlouiniBenRachedKammounTempone}, where
importance sampling and control variates were combined to achieve a
further reduction in variance. Recently, state-dependent importance
sampling has been proposed using a stochastic optimal control
formulation \cite{ben2023state}. This generic approach has been found
to be efficient when considering rare event quantities that take the form of an expected value of some functions applied to sums of independent RVs. 

The rest of this paper is organized as follows. Section~\ref{sec:numericalMethod} describes the numerical method for estimating left-tail rare events. Section~\ref{sec:theory} presents error and cost analysis of the 
rare-event estimation method both when using Newton--Cotes rules and FFT numerical integrators for discrete convolution. Section~\ref{sec:numerics} studies the method numerically on a collection of rare-event problems, and compares its performance to the 
saddle-point method for sums of Log-Normal RVs. Section~\ref{sec:conclusion} summarizes our findings and discusses possible future extensions.

\section{The numerical method} \label{sec:numericalMethod}
In this section we construct an iterative Newton-Cotes quadrature method for approximating linear convolutions of probability densities, and ultimately the probability of failure. For the sake of streamlining the exposition, we will restrict ourselves to the setting with i.i.d.~RVs, but an extension to independent (and not identically distributed) RVs is exemplified in Section~\ref{sec:Lognormal}. Let $f:[0,\infty) \to [0,\infty)$ denote the probability density function (PDF) of the i.i.d. RVs $X_1, \ldots, X_n$. We consider the problem of estimating 
the rare event
\[
\alpha = \mathbb{P} \left (\sum_{i=1}^n X_i <\gamma \right )   = \int_0^\gamma f_{S_n}(x) \, dx
\]
for small values of $\gamma>0$, where $f_{S_n}$ is the PDF of $S_n=\sum_{i=1}^n X_i$. For simplicity, we will make the assumption that $f(0):= \lim_{x \downarrow 0} f(x) = 0$ throughout this section, and describe the extension to settings when $f(0)\neq 0$) in 
equation~\eqref{eq:discLinConv2}. 

\subsection{Numerical integration} \label{sec:numericalConvolution}

The first step in our deterministic approach for estimating $\alpha$ by discrete linear convolution is to observe that due to the independence and identical distribution of $X_1,\ldots,X_n$, the PDF of $S_n := \sum_{i=1}^n X_i$
is equal to the $n-$fold linear convolution of the density $f$:
\[
f_{S_n} = f^{*n} := \text{n-fold convolution of } f.   
\]
In particular, we have that
\[
  f_{S_2}(x) = f*f(x) = \int_0^x f(y) f(x-y) \, dy,
\]
and for any integers $k,\ell,n \ge 1$ such that $n = k + \ell$, it holds that
\begin{equation}\label{eq:nConv}
  f_{S_n}(x) = f^{*k}*f^{*\ell}(x) = \int_0^x f^{*k}(y) f^{*\ell}(x-y) \, dy. 
\end{equation}
This means that the probability of failure can be expressed as  
\[
\alpha = \mathbb{P} \left (\sum_{i=1}^n X_i <\gamma \right ) = \int_0^\gamma f^{*n}(x) \, dx,   
\]
and that a good approximation of $\alpha$ can be obtained from a good approximation of $f^{*n}$.\\
The second step is to approximate $f^{*n}$ on $[0, \gamma]$ by
numerical integration. Since $f(0) = 0$, any integrand 
of the form $g(y;x) = f^{*k}(y) f^{*\ell}(x-y)$ is periodic on
$[0,x]$ for any $x \in [0,\gamma]$, as
\[
  g(0;x) = f^{*k}(0) f^{*\ell}(x)=0, \quad \text{and}
  \quad g(x;x) = f^{*k}(x) f^{*\ell}(0) = 0. 
\]
Conveniently, the trapezoidal rule yields a high order 
of convergence for sufficiently smooth periodic 
integrands, and it is therefore a suitable quadrature rule for linear convolution of~\eqref{eq:nConv}, as described through the following steps:
Consider the uniform mesh
\[
  x_j = j h \quad j =0,1,\ldots,N \quad \text{with} \quad h =\frac{\gamma}{N},
\]
and the discrete function
\[
\bar f(x_j) := f(x_j) \qquad j =0,1,\ldots, N.
\]
Let the discrete approximation of $f^{*2}$ be given by the trapezoidal rule/discrete linear convolution
\begin{equation}\label{eq:fBar2}
\bar f^{\circledast2}(x_k) = h \sum_{j=0}^{k} \bar f(x_j) \bar f(x_{k} - x_j) \qquad k = 0,\ldots, N. 
\end{equation}
The operator notation $\circledast$
represents discrete linear convolution scaled by the step-size factor $h$.
It is introduced to distinguish discrete linear convolution from 
continuous-space linear convolution, which we denoted by $\ast$. 
To define higher-order discrete convolutions we proceed as follows: for any two $\mathbb{R}^{N+1}$-vectors 
$\bar g_1 = (\bar g_1(x_0), \ldots, \bar g_1(x_{N}))$ and $\bar g_2 = (\bar g_2(x_0), \ldots, \bar g_2(x_{N}))$, let  
\[
(\bar g_1 \circledast \bar g_2 )(x_k) := h \sum_{j=0}^{k} \bar g_1(x_j) \bar g_2(x_{k} - x_j) \,.
\]
It then holds that $\circledast$ is an associative operation, as for 
any $\bar g_1, \bar g_2, \bar g_3 \in \mathbb{R}^{N+1}$,
\[
\Big(\bar g_1 \circledast (\bar g_2 \circledast \bar g_3) \Big)(x_k) = 
h^2 \sum_{j_1+j_2 + j_3= k} \bar g_1(x_{j_1}) \bar g_2(x_{j_2}) \bar g_3(x_{j_3}) =
\Big((\bar g_1 \circledast \bar g_2) \circledast \bar g_3 \Big)(x_k),
\]
This shows that $\bar f^{\circledast n}$ is a well-defined operation for any $n\ge 2$, 
since it does not matter which order the convolutions are taken in. So for any $n>2$
and mesh point $x_k$,
\begin{equation}\label{eq:associativeDiscreteConvProp}
\begin{split}
\bar f^{\circledast n}(x_k) &= (\bar f^{\circledast (n-1)} \circledast \bar f )(x_k)
= (\bar f^{\circledast (n-2)} \circledast \bar f^{\circledast 2})(x_k)\\ &
= \cdots = (\bar f \circledast \bar f^{\circledast (n-1)})(x_k)
= h^{n-1} \sum_{j_1 + \cdots +j_{n} = k} \bar f(x_{j_1}) \cdots \bar f(x_{j_{n}}) \,.
\end{split}
\end{equation}

\subsection{Efficient computation of \texorpdfstring{\(\bar f^{\circledast n}\)}{bar f conv}}
\label{sec:effComp}
In this section we describe the  sequence of discrete convolutions used to compute $\bar f^{\circledast n}$ efficiently. This matters for the performance of the method when implemented on a computer. 

Let $m\in \mathbb{N}$ denote the largest integer such that $m \le \log_2(n)$ and 
for $\ell =2,\ldots,m$, we minimize the number of convolutions through computing
\begin{equation}\label{eq:discLinConv}
\bar f^{\circledast 2^{\ell}}(x_k) = h \sum_{j=0}^{k} \bar f^{\circledast2^{\ell-1}}(x_j) \bar f^{\circledast2^{\ell-1}}(x_{k} - x_j) \qquad k = 0,\ldots, N. 
\end{equation}
If $2^m=n$, then the above computation represents discrete approximation $\bar f^{\circledast n}$ of the density of
$\sum_{j=1}^n X_j$, otherwise we obtain the approximation by 
\[
\bar f^{\circledast n}(x_k) = h \sum_{j=0}^{k} \bar f^{\circledast2^{m}}(x_j) \bar f^{\circledast(n-2^{m})}(x_{k} - x_j) \qquad k = 0,\ldots, N,
\]
where $\bar f^{\circledast(n-2^{m})}$ is computed in at most $m-1$ steps by a similar iterative application of the Trapezoidal rule.

Lastly, the probability of failure is also approximated by a suitable
closed Newton--Cotes formula:
\begin{equation}\label{eq:newtonCotesAlpha}
\bar \alpha_N := \sum_{j=0}^{N} w_j \bar f^{\circledast n}(x_j),
\end{equation}
where the weights sum to the the length of the interval $[0,\gamma]$, meaning  
$\sum_{j=0}^N w_j = \gamma$, and we
restrict ourselves to the class of Newton--Cotes formulas with
non-negative weights, which means formulas with degree $d \le 8$,
cf.~\cite[Chapter 7]{suli2003introduction}.  See
Theorem~\ref{thm:errorEstimate} for further details on how to choose
the Newton--Cotes rule.
Since it holds that $\bar f^{\circledast j}(0)=0$ for any $j\ge1$, all of the
numerical integration above can be viewed as applications of the Trapezoidal rule.

\begin{table}
    \centering
    \begin{tabular}{c|l}
        \hline
        Fading Type & PDF \\
        \hline
        Rayleigh & $\frac{2x}{\Omega} \exp \left (-\frac{x^2}{\Omega} \right )$\\
        Nakagami-m & $\frac{2m^m}{\Omega^m\Gamma(m)} x^{2m-1}\exp \left (-\frac{m}{\Omega} x^2\right )$\\
        Rice & $ \frac{2(K+1)x}{\Omega}\exp \left (-K-\frac{K+1}{\Omega}x^2 \right )I_0 \left (2\sqrt{\frac{K(K+1)}{\Omega}} x\right )$\\
        Weibull & $ k \left (\frac{\beta}{\Omega} \right )^{k}x^{k-1} \exp \left (-\left (\frac{x\beta}{\Omega}\right)^k \right ) $ , \text{  } $\beta=\Gamma\left ( 1+\frac{1}{k}\right )$\\
        Log-Normal & $\frac{1}{x\sigma\sqrt{2\pi}}\exp \left (-\frac{(\log(x)-\mu)^2}{2\sigma^2} \right )$\\
        generalized Gamma & $\frac{p(\frac{\beta}{\Omega})^{d}}{\Gamma(\frac{d}{p})} x^{d-1} \exp \left (-(\frac{\beta}{\Omega}x)^{p} \right )$, \text{  }$\beta=\frac{\Gamma \left (\frac{d+1}{p} \right )}{\Gamma \left (\frac{d}{p} \right )}$\\
        $\kappa-\mu$ &  $\frac{2\mu(K+1)^{\frac{1+\mu}{2}}x^{\mu}}{\Omega^{\frac{1+\mu}{2}}\kappa^{\frac{\mu-1}{2}}}\exp \left (-\mu K-\frac{K+1}{\Omega}\mu x^2 \right )I_{\mu-1} \left (2\mu\sqrt{\frac{K(K+1)}{\Omega}} x\right )$\\
        \hline
    \end{tabular}
    \caption{Some common PDF satisfying $f(0)=0$. Functions $\Gamma(\cdot)$ and $I_{\xi}(\cdot)$ are respectively the Gamma function and the modified Bessel function of the first kind and order $\xi$ \cite{gradshteyn2007}}
    \label{table1}
\end{table}

\begin{remark}\hspace{1mm} The assumption that $f(0)=0$ is not restrictive as it is satisfied by most of the common fading channel envelopes. A non exhaustive list of common fading channel having the previous assumption satisfied is in Table I. 
\end{remark}

\begin{remark}
For the simpler setting when $f$ is a probability mass function instead of a
probability density function, a similar approach to approximating the density 
of sums $Y_1 + \cdots + Y_n$ where $Y_k \stackrel{iid}{\sim} f$ through FFT-based 
convolution has been studied in~\cite{wilson2017accurate}.
\end{remark}

\subsection{Implementations of discrete linear convolution}
There are two standard approaches to implement the above discrete linear convolution~\eqref{eq:discLinConv}
in most programming languages, with different strengths and weaknesses: 

\begin{enumerate}

\item \textbf{Direct convolution:} For each $k=0,1,\ldots,N$ compute the RHS sliding sum of~\eqref{eq:discLinConv}. In Matlab, this can be achieved by calling the $\textbf{conv}()$ function as follows:
  \begin{verbatim}
    fBar_2l = conv(g,g);
    fBar_2l   = fBar_2l(1:N+1)*h; 
  \end{verbatim}
  for an input vector $g:=\bar f^{2^{\ell-1}\circledast} \in \mathbb{R}^{N+1}$. The computational cost of this function call, measured in number of floating point operations is $\mathcal{O}(N^2)$, which is quite high. But Theorem~\ref{thm:errorWithRounding} and our numerical experiments in Section~\ref{sec:convVsFFT} show that direct convolution is accurate even for very rare events and it does not appear sensitive to round-off errors.

 \item \textbf{FFT-based convolution:} The second approach is to append/pad $N$ zeros to the vector $f^{\circledast2^{\ell-1}} \in \mathbb{R}^{N+1}$, and use the fast Fourier transform (FFT) to compute 
  the linear convolution as follows: 
  \begin{equation}\label{eq:fftMethod}
    \begin{split}
      \bar f^{\circledast2^{\ell-1}} &= [\bar f^{\circledast2^{\ell-1}}(x_0), \ldots, \bar f^{\circledast2^{\ell}}(x_{N}), \underbrace{0,\ldots, 0}_{N}] \\
      \bar f^{\circledast2^{\ell}} &=  \textbf{IFFT}(\textbf{FFT}(\bar f^{\circledast2^{\ell-1}}).\text{ \textasciicircum}^2 ) \times h  \\
      \bar f^{\circledast2^{\ell}} &= [\bar f^{\circledast2^{\ell}}(x_0), \ldots, \bar f^{\circledast2^{\ell}}(x_N)] \, .
    \end{split}
  \end{equation}
  which in Matlab takes the form 
  \begin{verbatim}
    g       = [g zeros(1,N)];
    fBar_2l = ifft(fft(g).^2)*h;
    fBar_2l = fBar_2l(1:N+1);
   \end{verbatim}
  An advantage with this approach is that the computational cost of the three assignments~\eqref{eq:fftMethod} is $\mathcal{O}(N \log(N))$, which for large $N$ will be far lower than the cost of direct convolution. A downside is that FFT-based convolution 
  can be very sensitive to rounding errors when the floating point precision is 
  low, which occurs when the machine epsilon is greater or of the similar magnitude as $\alpha$, cf.~Section~\ref{sec:roundingErrors}. This is illustrated in an numerical example in Section \ref{sec:convVsFFT}.

  \end{enumerate}

\begin{remark}
The method straightforwardly extends to settings where 
$X_1,\ldots, X_n$ are independent but not identically distributed, cf.~\eqref{eq:discLinConv2}.
\end{remark}

\section{Theoretical results}\label{sec:theory}

In this section we first prove that $\bar f^{\circledast n}(x_k) \to f^{*n}(x_k)$ and 
that $\bar{\alpha}_N \to \alpha$ as $N \to \infty$ and obtain convergence rates for 
these results in Lemma~\ref{lem:fNumConvErr} and Theorem~\ref{thm:errorEstimate}, respectively. 
Thereafter, we bound the relative approximation error of $\fl{\bar \alpha_N} \approx \bar \alpha_N$ for direct convolution in terms of the floating-point precision machine epsilon in Lemma~\ref{lem:roundingErrorDirectConv}. This leads to the upper bound for the computable approximation $\fl{\bar \alpha_N} \approx \alpha$ that is 
described in Theorem~\ref{thm:errorWithRounding}. A similar results for FFT-based convolution is given in \ref{thm:roundingErrorFFT}. Finally, we bound the computational cost of our method in Theorem~\ref{thm:costEstimate} and combine the results on error and cost to compare the efficiency of direct convolution and FFT-based convolution in~\eqref{eq:costVsError}.

Let $C_{0}^{2p}([0,\gamma])$ denote the set of $2p$ times continuously differentiable
functions on $[0,\gamma]$ for which $f^{(k)}(0) = 0$ for all $k \in \{0,1,\ldots,2p-1\}$, and let $C_{0,\per}^{2p}([0,\gamma]) \subset C_{0}^{2p}([0,\gamma])$ denote the subset of such functions that also are periodic on $[0,\gamma]$ up to the $(2p-1)$-th derivative, meaning that $f^{(k)}(0) = f^{(k)}(\gamma)$ for all $k \in \{0,1,\ldots,2p-1\}$.

The first Lemma provides a convergence rate for $\bar f^{\circledast m}(x_k) \to f^{*m}(x_k)$ as $N\to \infty$. It does not take rounding errors into account.  
\begin{lemma}\label{lem:fNumConvErr}
Let $f \in C_{0}^{2p}([0,\gamma])$ for some integer $p \ge 1$ and let $f^{\circledast n}(x_k)$
for $n\ge 2$ be defined by~\eqref{eq:associativeDiscreteConvProp}.  Then, there exists a constant $C_1>0$ that depends on $p$ such that 
\begin{equation}\label{eq:convRateFNumConv}
|\bar f^{\circledast n}(x_k) - f^{*n}(x_k)| \le (1+x_k)^{n -2} C_1 C_2(n,x_k) N^{-2p} \text{ for all } k =0,1, \ldots, N, 
\end{equation}
where 
\begin{multline*}
C_2(n,x_k) :=\\ 
\max_{m=2,\ldots,n} \max_{x_{j_1} + \cdots + x_{j_{m-1}} = x_k} \; \max_{ 0\le y \le x_{j_1}}  \left| \frac{d^{2p}}{dy^{2p}} f^{*(n+1-m)}(y) f(x_{j_1}-y)  \right| \prod_{\ell =2}^{m-1} f(x_{j_\ell})\,
\end{multline*}
with the conventions that $\prod_{\ell =2}^{m-1} f(x_{j_\ell}) \equiv 1$  when $m =2$
and $f^{*1} = f$.
\end{lemma}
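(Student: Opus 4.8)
The plan is to prove the bound by induction on $n$, with the base case $n=2$ coming from the classical error estimate for the trapezoidal rule applied to a smooth periodic integrand. For the base case, fix a mesh point $x_k$ and consider the integrand $g(y) := f(y) f(x_k - y)$ on $[0, x_k]$. Since $f \in C_0^{2p}([0,\gamma])$, all derivatives of $g$ up to order $2p-1$ vanish at both endpoints $y=0$ and $y=x_k$ (by the Leibniz rule, each term in $g^{(j)}$ for $j \le 2p-1$ contains a factor $f^{(i)}(0)=0$ or $f^{(i)}(x_k)$ with $i \le 2p - 1$ — wait, one must be slightly careful: at $y = x_k$ the factor is $f^{(i)}(0) = 0$, and at $y=0$ the factor is $f^{(i)}(0) = 0$; in both cases one of the two arguments is $0$). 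Hence $g$ extends to a $C^{2p-1}$ periodic function, and the Euler--Maclaurin / trapezoidal error formula gives $|\bar f^{\circledast 2}(x_k) - f^{*2}(x_k)| \le C_1 x_k \, \max_{0 \le y \le x_k}|g^{(2p)}(y)| \, h^{2p}$, where $h = \gamma/N \le 1$ (or more precisely $h = x_k / k$ on the subgrid, so writing everything in terms of $N^{-2p}$ absorbs a harmless constant). This matches the claimed bound with the $(1+x_k)^{0}$ prefactor and $C_2(2,x_k)$ being exactly $\max_{0\le y \le x_k} |\frac{d^{2p}}{dy^{2p}} f(y) f(x_k - y)|$.

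For the inductive step, I would write $\bar f^{\circledast n}(x_k) = (\bar f \circledast \bar f^{\circledast(n-1)})(x_k) = h\sum_{j=0}^k \bar f(x_j)\, \bar f^{\circledast(n-1)}(x_k - x_j)$ and compare it to $f^{*n}(x_k) = \int_0^{x_k} f(y) f^{*(n-1)}(x_k - y)\, dy$ by splitting the error into two pieces. The first piece is the quadrature error of the trapezoidal rule applied to the exact integrand $y \mapsto f(y) f^{*(n-1)}(x_k - y)$, which is again periodic and $C^{2p-1}$ on $[0,x_k]$ — here one needs $f^{*(n-1)} \in C_0^{2p}$, which follows since convolution with $f$ preserves the vanishing-derivative-at-$0$ property (each derivative of $f^{*(n-1)}$ at $0$ is an integral against lower-order convolutions, all vanishing). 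The second piece is $h \sum_{j=0}^k \bar f(x_j) \big(\bar f^{\circledast(n-1)}(x_k - x_j) - f^{*(n-1)}(x_k - x_j)\big)$, which by the induction hypothesis is bounded by $h\sum_{j=0}^k f(x_j) \cdot (1 + x_k - x_j)^{n-3} C_1 C_2(n-1, x_k - x_j) N^{-2p}$; bounding $h\sum_j f(x_j)$ by something like $1 + x_k$ (a trapezoidal-sum estimate of $\int_0^{x_k} f \le 1$, with the $+x_k$ slack absorbing endpoint contributions and the fact that $f$ need not integrate to exactly $1$ on a bounded interval) and $(1+x_k-x_j)^{n-3} \le (1+x_k)^{n-3}$ produces the factor $(1+x_k)^{n-2}$. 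One then needs to check that the maximum defining $C_2(n-1, \cdot)$ over the relevant subsums, together with the new quadrature term from the first piece, is dominated by $C_2(n, x_k)$ — this is precisely why $C_2$ is defined with the nested maxima over all partitions $x_{j_1} + \cdots + x_{j_{m-1}} = x_k$ and all $m = 2, \ldots, n$: the recursion on $n$ peels off one convolution at a time and accumulates exactly these terms.

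The main obstacle, and the part requiring the most care, is the bookkeeping in the inductive step: verifying that the two error contributions at level $n$ — the fresh trapezoidal error on $f \ast f^{*(n-1)}$ and the propagated error from level $n-1$ — both fit under the single expression $C_2(n,x_k)$ with the stated combinatorial structure, and that the geometric-type prefactor $(1+x_k)^{n-2}$ is genuinely what emerges (rather than, say, $(1+x_k)^{n-1}$) once one is careful about which step contributes a factor of $h\sum f(x_j)$ and which does not. A secondary technical point is confirming that the subgrid $\{0, h, 2h, \ldots, kh\}$ on $[0, x_k]$ has uniform spacing $h = \gamma/N$ so that the trapezoidal error is genuinely $\cO(N^{-2p})$ uniformly in $k$, and that the hidden constant $C_1$ (from Euler--Maclaurin, depending only on $p$) can indeed be chosen independent of $n$ and $k$. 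I would also remark that the smoothness hypothesis $f \in C_0^{2p}$ propagates to all the iterated convolutions $f^{*j}$, which is needed for every application of the periodic trapezoidal estimate; this is a short lemma (convolution is smoothing and preserves vanishing of derivatives at the origin) that I would either state inline or fold into the induction.
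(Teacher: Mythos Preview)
Your overall strategy---induction on $n$, the base case $n=2$ via the periodic trapezoidal estimate, and the two-term split at the inductive step into a fresh quadrature error plus a propagated error---is exactly the paper's approach.

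The one place where your plan does not close is the bookkeeping for the propagated piece. You propose to bound $h\sum_j f(x_j)$ by something like $1+x_k$ and then separately argue that $C_2(n-1,\cdot) \le C_2(n,x_k)$. With the stated definition of $C_2$ this fails: the level-$m$ term of $C_2(n-1,x_{k-j})$ matches the level-$(m+1)$ term of $C_2(n,x_k)$ \emph{only after} multiplication by the extra factor $f(x_j)$---that is precisely what the product $\prod_{\ell=2}^{m-1} f(x_{j_\ell})$ is there to absorb. The paper's move is instead to push $f(x_{k-j})$ into $C_2$ via
\[
f(x_{k-j})\,C_2(n-1,x_j) \;\le\; C_2(n,x_k),
\]
after which the remaining sum is just $h\sum_{j} 1 \le x_k$ (endpoint terms vanish). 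Combining this with the fresh trapezoidal term, which is bounded by $C_1 C_2(n,x_k) N^{-2p}$ (the $m=2$ case of the max defining $C_2(n,x_k)$), yields
\[
\big(x_k(1+x_k)^{n-3} + 1\big)\,C_1 C_2(n,x_k) N^{-2p} \;\le\; (1+x_k)^{n-2}\,C_1 C_2(n,x_k) N^{-2p},
\]
which also answers your question about why the prefactor is $(1+x_k)^{n-2}$ rather than $(1+x_k)^{n-1}$.
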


\begin{proof}
Assume that~\eqref{eq:convRateFNumConv} holds for all $ 2\le n \le \bar n$ for some $\bar n \ge 2$. Recalling from~\eqref{eq:associativeDiscreteConvProp} that 
$\bar f^{\circledast (\bar n+1)}(x_k) = (\bar f^{\circledast \bar n} \circledast \bar f)(x_k)$
and also that $\bar f(x_j) = f(x_j)$ for all $j =0,1,\ldots, N$,
we obtain
\begin{align*}
|&\bar f^{\circledast (\bar n+1)}(x_k) - f^{*( \bar n+1)}(x_k)| =\\
&\hspace{31.75mm} h \left|\sum_{j=0}^k  \bar f^{\circledast \bar n}(x_j) \bar f(x_{k-j}) - f^{\ast \bar n}(x_j) f(x_{k-j})\right|\\
&\hspace{30mm} + \left|h \sum_{j=0}^k f^{\ast \bar n}(x_j)f(x_{k-j}) - \int_0^{x_k} f^{\ast \bar n}(y) f(x_k-y) dy \right| =: I + II\, . 
\end{align*}

For the first term, 
\[
\begin{split}
I &= h \left|\sum_{j=0}^k  \bar f^{\circledast \bar n}(x_j) \bar f(x_{k-j}) - f^{\ast \bar n}(x_j) f(x_{k-j})\right|\\
& \le  h \sum_{j=0}^k  |\bar f^{\circledast \bar n}(x_j) -  f^{\ast \bar n}(x_{j})| f(x_{k -j})\\
& \le  h C_1 (1+x_k)^{\bar n -2}\sum_{j=0}^{k} C_2(\bar n,x_j) f(x_{k-j}) \\
& \le (k-1)h \times (1+x_k)^{\bar n -2} C_1 C_2(\bar n +1 , x_k) N^{-2p} \\
&\le x_k (1+x_k)^{\bar n -2} C_1 C_2(\bar n+1, x_k) N^{-2p},
\end{split}
\]
where the penultimate inequality follows from the change of subindex in 
$x_{k-j} = x_{j_m}$ and
\[
\begin{split}
&f(x_{k-j}) C_2(\bar n,x_j) \\
&= f(x_{k-j})  \max_{m=2,\ldots, \bar n} \max_{x_{j_1} + \cdots + x_{j_{m-1}} = x_j} \; \max_{ 0\le y \le x_{j_1}}  \left| \frac{d^{2p}}{dy^{2p}} f^{*(\bar n +1-m)}(y) f(x_{j_1}-y)  \right| \prod_{\ell =2}^{m-1} f(x_{j_\ell})\\
&\le \max_{m=2,\ldots,\bar n} \max_{x_{j_1} + \cdots + x_{j_{m}} = x_k} \; \max_{ 0\le y \le x_{j_1}}  \left| \frac{d^{2p}}{dy^{2p}} f^{*(\bar n+1-m)}(y) f(x_{j_1}-y)  \right| \prod_{\ell =2}^{m} f(x_{j_\ell})\\
&= \max_{m=3,\ldots,\bar n+1} \max_{x_{j_1} + \cdots + x_{j_{m-1}} = x_k} \; \max_{ 0\le y \le x_{j_1}}  \left| \frac{d^{2p}}{dy^{2p}} f^{*(\bar n+2-m)}(y) f(x_{j_1}-y)  \right| \prod_{\ell =2}^{m-1} f(x_{j_\ell})\\
&\le C_2(\bar n +1, x_k)\, .
\end{split}
\]\\
The second term is the quadrature error of the composite Trapezoidal rule applied 
to the integrand \mbox{$g(y) = f^{*\bar n}(y) f(x_k-y)$.} Thanks to $g \in C^{2p}_{0,\per}([0,x_k])$,
we obtain that 
\[
\begin{split}
II \le  C_1 \max_{ 0\le y \le x_{k}}  \left| \frac{d^{2p}}{dy^{2p}} f^{*\bar n}(y) f(x_k-y) \right| N^{-2p} \le C_1C_2(\bar n+1, x_k) N^{-2p},
\end{split}
\]
for the constant $C_1>0$ introduced above. This yields, 
\[
|\bar f^{\circledast (\bar n+1)}(x_k) - f^{*( \bar n+1)}(x_k)| 
\le (1+x_k)^{\bar n-1} C_1C_2(\bar n+1,x_k)\, . 
\]\\
We next verify that~\eqref{eq:convRateFNumConv} holds for $n=2$. Since $\bar f^{\circledast 2}(x_k)$
is the composite Trapezoidal rule approximation of $f^{*2}(x_k)$, as is apparent from 
\[
|\bar f^{\circledast 2}(x_k) - f^{*2}(x_k)| = \left|h \sum_{j=0}^k f(x_j)f(x_{k-j}) - \int_0^{x_k} f(y) f(x-y) dy \right|,
\]
and $f \in C^{2p}_0([0, \gamma])$ implies that $g(y):= f(y) f(x_k-y)$ belongs to $C^{2p}_{0,\per}([0, x_k])$, it follows from~\cite[Chapter 7.6]{suli2003introduction} that 
\[
|\bar f^{\circledast 2}(x_k) - f^{*2}(x_k)| \le C_1 \underbrace{\max_{ 0\le y \le x_{k}}  \left| \frac{d^{2p}}{dy^{2p}} f(y) f(x_k-y) \right|}_{= C_2(2,x_k)} N^{-2p},
\]
for the constant $C_1>0$ introduced above. The proof follows by induction. 
\end{proof}

The next theorem proves a convergence rate for $\alpha_N \to \alpha$ as $N\to \infty$ in the setting of no rounding errors. 

\begin{theorem}\label{thm:errorEstimate}
  Let $f \in C_{0}^{2p}([0,\gamma])$ for some integer $p \ge 1$, and let $r \le 2p$ be the order of convergence for the Newton--Cotes rule with non-negative weights that is used to compute $\bar \alpha$ in~\eqref{eq:newtonCotesAlpha}. 
  Then there exist a constant $C_3>0$ that depends on $r$ such that
  \begin{equation}\label{eq:convRate}
    |\bar \alpha_N - \alpha | \le \gamma  C_1 \overline{C}_2(n,\gamma) N^{-2p} + C_3 \max_{x \in [0,\gamma]} \left|\frac{d^r}{dx^r} f^{*n}(x)\right| N^{-r},  
  \end{equation}
  where $\overline{C}_2(n,\gamma) := \max_{k=0,\ldots,N} (1+x_k)^{n-2} C_2(n,x_k)$,
  and the constant $C_1$ and the mapping $C_2(n,x_k)$ are defined in Lemma~\ref{lem:fNumConvErr}. 
  
\end{theorem}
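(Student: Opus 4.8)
The plan is to split the total error, via the triangle inequality, into a term governed by how well $\bar f^{\circledast n}$ approximates $f^{*n}$ at the mesh points and a term that is purely the quadrature error of the outer Newton--Cotes rule. Writing $Q[g] := \sum_{j=0}^{N} w_j\, g(x_j)$ for the quadrature functional, we have $\bar\alpha_N = Q[\bar f^{\circledast n}]$ and $\alpha = \int_0^\gamma f^{*n}(x)\,dx$, hence
\[
|\bar\alpha_N - \alpha| \;\le\; \big|Q[\bar f^{\circledast n}] - Q[f^{*n}]\big| \;+\; \big|Q[f^{*n}] - \textstyle\int_0^\gamma f^{*n}(x)\,dx \big| \;=:\; A + B .
\]

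For the first term I would use that the chosen Newton--Cotes rule has non-negative weights, so $A \le \sum_{j=0}^{N} w_j\, |\bar f^{\circledast n}(x_j) - f^{*n}(x_j)|$; applying Lemma~\ref{lem:fNumConvErr} to each summand, bounding $(1+x_j)^{n-2} C_2(n,x_j) \le \overline{C}_2(n,\gamma)$ uniformly in $j$, and using $\sum_{j=0}^N w_j = \gamma$, gives $A \le \gamma\, C_1 \overline{C}_2(n,\gamma)\, N^{-2p}$, which is exactly the first term of~\eqref{eq:convRate}. For the second term I would invoke the standard error estimate for composite Newton--Cotes rules of order $r$ on the uniform mesh $x_j = jh$ (cf.~\cite[Chapter 7]{suli2003introduction}): since $f \in C_0^{2p}([0,\gamma])$ and linear convolution preserves smoothness, $f^{*n} \in C^{2p}([0,\gamma]) \subseteq C^{r}([0,\gamma])$ because $r \le 2p$, so $B \le C_3 \max_{x\in[0,\gamma]} \big|\tfrac{d^r}{dx^r} f^{*n}(x)\big|\, N^{-r}$ with $C_3$ depending only on $r$ (and on the interval length $\gamma$). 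Summing the bounds for $A$ and $B$ yields~\eqref{eq:convRate}.

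The argument is mostly bookkeeping once Lemma~\ref{lem:fNumConvErr} and the textbook quadrature bound are available; the points requiring a little care are (i) exploiting non-negativity of the weights to pull the absolute value inside the sum defining $A$, (ii) verifying the regularity chain $f\in C_0^{2p} \Rightarrow f^{*n}\in C^{r}$ so that the order-$r$ estimate for $B$ genuinely applies, and (iii) noting that, in contrast to the integrands inside the iterated convolution of Lemma~\ref{lem:fNumConvErr}, the outer integrand $f^{*n}$ is in general \emph{not} periodic on $[0,\gamma]$ (typically $f^{*n}(\gamma)\neq 0$), which is precisely why only order $r$ --- rather than the order $2p$ enjoyed by the inner trapezoidal steps --- is available for $B$. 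I do not foresee a genuine obstacle, as the main analytic work has already been carried out in Lemma~\ref{lem:fNumConvErr}.
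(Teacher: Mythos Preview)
Your proposal is correct and follows essentially the same route as the paper: split via the triangle inequality into $\big|Q[\bar f^{\circledast n}]-Q[f^{*n}]\big|$ and $\big|Q[f^{*n}]-\int_0^\gamma f^{*n}\big|$, handle the first with non-negativity of the weights, Lemma~\ref{lem:fNumConvErr}, and $\sum_j w_j=\gamma$, and handle the second with the standard order-$r$ Newton--Cotes bound. Your added remarks on the regularity chain and on why only order $r$ is available for the outer integral are welcome but do not change the argument.
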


\begin{proof}
  For a closed Newton--Cotes formula with convergence rate $r\le 2p$, it follows
  by~\cite[Chapter 7.1.1]{isaacson2012analysis} that
  \[
    \left|\alpha - \sum_{j=0}^N w_j f^{*n}(x_j) \right| \le C_3 N^{-r}.
  \]
  The triangle inequality and the non-negativity of the weights $w_j$
  yield the final bound
  \[
    \begin{split}
      |\alpha - \bar \alpha_N|&  = \left|\alpha - \sum_{j=0}^N w_j \bar f^{\circledast n}(x_j) \right| \\
      & \le \left|\alpha - \sum_{j=0}^N w_j f^{*n}(x_j) \right|  +
        \sum_{j=0}^N w_j |f^{*n}(x_j) - \bar f^{\circledast n}(x_j)| \\
      & \le  C_3 \max_{x \in [0,\gamma]} \left|\frac{d^r}{dx^r} f^{*n}(x)\right| N^{-r} + \sum_{j=0}^N w_j (1+x_k)^{n-2} C_1 C_2(n,x_j) N^{-2p} \\
      & \le C_3 \max_{x \in [0,\gamma]} \left|\frac{d^r}{dx^r} f^{*n}(x)\right| N^{-r} + \gamma  C_1 \overline{C}_2(n,\gamma) N^{-2p} \\
    \end{split}
  \]
\end{proof}

\begin{remark} \label{rem:f_not_zero_at_zero}
If \(f \notin C^{2p}_0[0, \gamma]\), but we have \(f \in C^2[0, \gamma]\) (which is the case if e.g. $f(0) \neq 0$ or $f'(0) \neq 0$), then the slightly altered direct convolution (compare to~\eqref{eq:discLinConv}) 
\begin{multline}\label{eq:discLinConv2}
\bar f^{\circledast 2^{\ell}}(x_k) = 
h \sum_{j=1}^{k-1} \bar f^{\circledast2^{\ell-1}}(x_j) \bar f^{\circledast2^{\ell-1}}(x_{k} - x_j) \\
+ h \frac{\bar f^{\circledast2^{\ell-1}}(x_0) \bar f^{\circledast2^{\ell-1}}(x_{k}) +
\bar f^{\circledast2^{\ell-1}}(x_k) \bar f^{\circledast2^{\ell-1}}(x_0)}{2}
\end{multline}
lead to the error bounds as in Lemma~\ref{lem:fNumConvErr} and Theorem~\ref{thm:errorEstimate} with $p=1$. 
\end{remark}

\subsection{Rounding errors} \label{sec:roundingErrors}
In practice, approximations of $\alpha$ are computed using floating-point arithmetic where a float is represented by $x = s \times b \times 2^e$ with sign $s$ (1-bit), the significand $b \in [1,2)$ and exponent $e$. The standard IEEE 754 64-bit floats, for example, has $p=53$-bit significand precision (52-bits stored) and 11-bit exponent. For estimating rounding errors in relative error, the machine epsilon $\varepsilon = 2^{-p}$, which is equal to half the distance between the number $1$ and the closest floating point number to $1$, is important. For an $x \in \mathbb{R}$, let $\fl{x}$ denote the closest number to $x$ among the floating point numbers. Then it holds that 
$|x - \fl{x}| \le (1 + \varepsilon)|x|$.  

More generally, we let 
$\fl{\bar{\alpha}_N}$ denote the value of $\bar{\alpha}_N$ that 
is obtained when \textbf{all underlying arithmetic operations} are computed with
the given floating point precision, and thus possibly all being subject to rounding errors, 
and similarly also for $\fl{\bar f^{\circledast n}(x_k)}$.
Observe that this notation is recursive, it assumes that a quantity is computed in a uniquely specified manner (otherwise it would not be clear how to estimate rounding errors), and it is extremely compact, as 
is illustrated when applying it to the formula~\eqref{eq:discLinConv}:
\[
\fl{\bar f^{\circledast 2^\ell}(x_k)} = \fl{\fl{h}  \fl{\sum_{j=0}^{k} \fl{\fl{\bar f^{\circledast2^{\ell-1}}(x_j)} \fl{\bar f^{\circledast 2^{\ell -1}}(x_{k} - x_j)} }}}.
\]

\begin{lemma}[Rounding error direct convolution]\label{lem:roundingErrorDirectConv}
Let $\bar \alpha_N$ be computed by direct convolution, $\fl{\bar \alpha_N}$ be computed with floating point arithmetic with machine epsilon $\varepsilon>0$ and let \(n\) be the number of i.i.d RVs in the underlying sum. Furthermore, set 
\(\bar{m} = \lceil \log_2(n) \rceil\).
Assume that for each \(x\) in the codomain of \(f\) we have \(\abs{\fl{x}-x} \leq xc\varepsilon\) for some \(c \in (0,1]\). Moreover, assume that \(N \geq 2^{10}\), \(\varepsilon \leq 2^{-53}\) and that \(2^{2\bar{m}+2}N \varepsilon<1/10\). Then it holds that 
\[\lvert \fl{\bar \alpha_N} - \bar \alpha_N \rvert \le 4 \bar\alpha_N n N \varepsilon.\]
\end{lemma}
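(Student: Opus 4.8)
The plan is to track how relative rounding errors accumulate through the recursive structure of the computation, using the standard device of writing each floating-point operation as the exact operation times a factor $(1+\delta)$ with $|\delta| \le \varepsilon$ (or $\le c\varepsilon$ for the evaluations of $f$). I would set up an induction on the convolution level $\ell$, proving a bound of the form
\[
\bigl|\fl{\bar f^{\circledast 2^\ell}(x_k)} - \bar f^{\circledast 2^\ell}(x_k)\bigr| \le \kappa_\ell \,\bar f^{\circledast 2^\ell}(x_k)\,\varepsilon,
\]
where $\kappa_\ell$ grows in a controlled (roughly geometric) way in $\ell$. The key observation that makes this work is that all quantities involved — $f(x_j)$, the partial products, and the partial sums in~\eqref{eq:discLinConv} — are \emph{non-negative}, so there is no catastrophic cancellation: the error in a sum of non-negative terms is bounded by the sum of the (relative) errors of the terms plus the summation error, and the latter is itself controlled by $N\varepsilon$ times the exact sum. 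This is exactly the phenomenon highlighted in the discussion after Lemma~\ref{lem:roundingErrorDirectConv}'s statement and the reason direct convolution is robust.

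Concretely, the base case $\ell = 0$ is the assumption $|\fl{f(x_j)} - f(x_j)| \le c\varepsilon f(x_j)$. For the inductive step, I would expand $\fl{\bar f^{\circledast 2^\ell}(x_k)}$ using the compact recursive notation displayed before the lemma: each product $\fl{\bar f^{\circledast 2^{\ell-1}}(x_j)}\,\fl{\bar f^{\circledast 2^{\ell-1}}(x_{k}-x_j)}$ carries relative error at most $(1+\kappa_{\ell-1}\varepsilon)^2(1+\varepsilon) - 1 \approx (2\kappa_{\ell-1}+1)\varepsilon$; summing $k+1 \le N+1$ such non-negative terms with floating-point addition adds another $\le (N+1)\varepsilon$-type contribution (using the standard bound that recursive summation of $m$ non-negative numbers has relative error $\le (m-1)\varepsilon/(1-(m-1)\varepsilon)$, which under the hypothesis $2^{2\bar m+2}N\varepsilon < 1/10$ is comfortably $\le 2N\varepsilon$); and multiplication by $\fl{h}$ adds one more $\varepsilon$. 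This yields a recursion $\kappa_\ell \le 2\kappa_{\ell-1} + c_0 N$ for an absolute constant $c_0$, so that after $\bar m = \lceil \log_2 n\rceil$ levels $\kappa_{\bar m} \lesssim 2^{\bar m} N \le 2nN$. A final step handles the (at most one) extra unbalanced convolution $\bar f^{\circledast 2^m}\circledast \bar f^{\circledast(n-2^m)}$ and the Newton--Cotes sum~\eqref{eq:newtonCotesAlpha}, each of which contributes at most another factor of the same order; assembling the constants and using the smallness hypotheses to absorb all the $(1+\cdots)$ factors into a clean multiplicative constant gives the stated bound $|\fl{\bar\alpha_N} - \bar\alpha_N| \le 4\bar\alpha_N nN\varepsilon$.

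I expect the main obstacle to be bookkeeping rather than any conceptual difficulty: one must be careful that the iterative squaring scheme of Section~\ref{sec:effComp} means the error factor can (in the worst case) \emph{double} at each of the $\bar m$ levels, so a naive bound would give $2^{\bar m}$-type growth — which is fine since $2^{\bar m} \approx n$, but the constants must be tracked precisely enough to land on the constant $4$ rather than something larger. The hypotheses $N \ge 2^{10}$, $\varepsilon \le 2^{-53}$, and especially $2^{2\bar m + 2} N\varepsilon < 1/10$ are exactly what is needed to linearize all the $(1+\delta)$ products — i.e., to replace $\prod(1+\delta_i)$ by $1 + \sum\delta_i + O(\varepsilon^2)$ with an explicit, small remainder — and to convert the recursion's solution into the final clean inequality. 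A secondary subtlety is that $\bar\alpha_N$ and the $\bar f^{\circledast n}(x_k)$ are all non-negative (inherited from $f \ge 0$), which is what allows the relative-error bounds at each stage to be promoted to the absolute-error statement in terms of $\bar\alpha_N$ itself; I would state this positivity explicitly at the outset.
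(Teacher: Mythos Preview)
Your proposal is correct and follows essentially the same route as the paper: an induction on the convolution level $\ell$ giving a relative-error bound $\kappa_\ell$ satisfying a recursion of the form $\kappa_\ell \le 2\kappa_{\ell-1} + cN$ (the paper makes this explicit as $\kappa_\ell = (2^{\ell+1}-2)N$, citing Keich's lemmas for the product and summation error bounds you describe directly), followed by the Newton--Cotes step. One minor point of bookkeeping: the ``extra unbalanced convolution'' is not just one operation, since $\bar f^{\circledast(n-2^m)}$ must itself be built recursively; the paper handles this by simply bounding the error of $\bar f^{\circledast n}$ by that of $\bar f^{\circledast 2^{\bar m}}$, which fits into your framework without difficulty.
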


\begin{proof}
We begin the proof by looking at some results from \cite{keich2005sfft} showing how rounding errors propagate when adding and multiplying already estimated values. Assume that we have a set of non-negative real numbers \(A = \{a_1, a_2, \hdots, a_N\}\) that are estimated by use of floating point arithmetic with an arbitrary number of operations used for calculating the approximations. We denote the estimated values \(\tilde A = \{\tilde a_1, \tilde a_2, \hdots, \tilde a_N\}\) and have that the absolute accumulated error of our estimates \(\tilde a_i\) can be bounded by some constant \(c_a > 0\), giving \(\abs{\tilde a_i - a_i} \leq \abs{a_i}c_a\varepsilon\). Moreover, let \(\fl{h}\) be our floating point estimation of the step length \(h \in \mathbb{R}_{>0}\) with \(\abs{\fl{h}-h} \leq h \varepsilon\). 
Based on the proof of \cite[Lemma 3]{keich2005sfft}, it is straightforward to check that
\begin{equation}\label{eq:multiplicationError}
    \abs{\fl{\fl{h}\tilde a_i \tilde a_j}-ha_ia_j} \leq ha_ia_j \left [2 + 2c_a + (1 + 4c_a+c_a^2)\varepsilon + (2c_a + 2c_a^2)\varepsilon^2 + c_a^2\varepsilon^3 \right ]\varepsilon,
\end{equation}
for \(i,j \in \{1, 2, \hdots, N\}\). Then, by letting \(S_k = \sum_{j=0}^k ha_ja_{k-j}, k \in \{1, 2, \hdots, N\}\) we have from \cite[Lemma 2]{keich2005sfft} that
\begin{multline}
    \label{eq:sumError}
    \abs{\fl{\sum_{j=0}^k \fl{\fl{h}\tilde a_i \tilde a_{k-j}}} - S_k} \leq\\
    S_k \left [k + 2 + 2c_a + (1 + 4c_a+c_a^2)\varepsilon + (2c_a + 2c_a^2)\varepsilon^2 + c_a^2\varepsilon^3 \right ]\varepsilon(1 + k\varepsilon),
\end{multline}
when \eqref{eq:multiplicationError} holds for each term in the sum and \((N + c_{ha_ja_{k-j}})\varepsilon < 1\), where 
\[c_{ha_ja_{k-j}} = 2 + 2c_a + (1 + 4c_a+c_a^2)\varepsilon + (2c_a + 2c_a^2)\varepsilon^2 + c_a^2\varepsilon^3.\]
Note that \ref{eq:sumError} only holds as long as all terms of \(S_k\) have the same sign, which in our case is non-negative.
For simplicity, we will bound the rounding error of all \(S_k\) by inserting \(N\) instead of \(k\) in \ref{eq:sumError}, yielding
\begin{multline*}
\abs{\fl{\sum_{j=0}^k \fl{\fl{h}\tilde a_i \tilde a_{k-j}}} - S_k} \leq \\ S_k \left [N + 2 + 2c_a + (1 + 4c_a+c_a^2)\varepsilon + (2c_a + 2c_a^2)\varepsilon^2 + c_a^2\varepsilon^3 \right ]\varepsilon(1 + N\varepsilon)
\end{multline*}
\(\text{ for all } k \in \{0, 1, \hdots N\}.\)

We now move on to prove the following statement by induction on \(l\):
\begin{equation} \label{eq:induction}
    \abs{\fl{\bar f^{\circledast 2^l}(x_k)}-\bar f^{\circledast 2^l}(x_k)} \leq \bar f^{\circledast 2^l}(x_k)(2^{l+1}-2)N\varepsilon.
\end{equation}

For \(l=1\) we first observe that from \eqref{eq:multiplicationError} and our assumptions we have 
\begin{align*}
    \abs{\fl{h\bar f(x_j) \bar f(x_{k-j})}-h\bar f(x_j) \bar f(x_{k-j})} \leq& h\bar f(x_j) \bar f(x_{k-j})[2 + 2c + (1 + 4c+c^2)\varepsilon \\
    &\hspace{25mm}+ (2c + 2c^2)\varepsilon^2 + c^2\varepsilon^3]\varepsilon\\
    \leq& h\bar f(x_j) \bar f(x_{k-j})5\varepsilon.
\end{align*}
Then, from \eqref{eq:sumError} we have that
\begin{align*}
    \abs{\fl{\bar f^{\circledast 2}(x_k)}-\bar f^{\circledast 2}(x_k)} \leq& \bar f^{\circledast 2}(x_k) (N+5)(1+N\varepsilon)\varepsilon \\ \leq& \bar f^{\circledast 2}(x_k)\left(N+5+N^2\varepsilon + 5N\varepsilon \right)\varepsilon \leq \bar f^{\circledast 2}(x_k) 2N\varepsilon,
\end{align*}
as we needed to show. 

Assume now that \eqref{eq:induction} holds for some \(q \in \mathbb{N} \setminus \{0\}\), that is 
\[\abs{\fl{\bar f^{\circledast 2^q}(x_k)}-\bar f^{\circledast 2^q}(x_k)} \leq \bar f^{\circledast 2^q}(x_k)(2^{q+1}-2)N \varepsilon\]
Furthermore, assume that given some value for \(N\) our \(q\) satisfies \(2^{2q+2}N \varepsilon<1/10\). Then we have from \eqref{eq:sumError} that
\begin{align*}
    \abs{\fl{\bar f^{\circledast 2^{q+1}}(x_k)}-\bar f^{\circledast 2^{q+1}}(x_k)} \leq& \bar f^{\circledast 2^{q+1}}(x_k) \bigg [N + 2 + 2\left(2^{q+1}-2\right)N \\
    &\hspace{11.5mm}+ \left (1 + 4\left[2^{q+1}-2\right]N+\left[2^{q+1}-2\right]^2N^2 \right )\varepsilon \\
    &\hspace{11.5mm}+ \left (2\left[2^{q+1}-2\right]N + 2\left[2^{q+1}-2\right]^2N^2 \right )\varepsilon^2 \\ 
    &\hspace{11.5mm}+ \left(2^{q+1}-2\right)^2N^2\varepsilon^3 \bigg ] \varepsilon(1 + N\varepsilon) \\
    \leq& \bar f^{\circledast 2^{q+1}}(x_k) \bigg [2 + \left(2^{q+2}-3\right)N + \varepsilon + \frac{1}{10} + \frac{N}{10} \\
    &\hspace{20.5mm}+ \frac{\varepsilon}{10} + \frac{2N\varepsilon}{10} + \frac{N\varepsilon^2}{10} \bigg ] \varepsilon(1 + N\varepsilon) \\
    \leq& \bar f^{\circledast 2^{q+1}}(x_k) \left [\left(2^{q+2}-\frac{5}{2} \right)N + \frac{N}{10} \right ]\varepsilon \\
    \leq& \bar f^{\circledast 2^{q+1}}(x_k) \left (2^{q+2}-2\right)N\varepsilon.
\end{align*}

Proceeding with the last step we need to calculate the accumulated rounding error for \(\bar \alpha_N = \sum_{j = 0}^N w_j \bar f^{\circledast n}(x_j)\). 
We then set \(m = \lfloor \log_2(n) \rfloor\). Next we need to consider two separate cases, one where \(m = \log_2(n)\), and the alternative case \(m < \log_2(n)\). In the former case we have
\[\bar \alpha_N = \sum_{j = 0}^N w_j \bar f^{\circledast 2^{\log_2(n)}}(x_j) = \sum_{j = 0}^N w_j \bar f^{\circledast 2^{m}}(x_j).\] 
In the latter case we calculate \(\bar f^{\circledast 2^{m}}(x_j)\) and \(\bar f^{\circledast (n-2^{m})}(x_j)\), which can be done in at most \(m-1\) steps. 
We then have that \(\bar f^{\circledast n}(x_j) = \bar f^{\circledast 2^m}(x_j) \circledast \bar f^{\circledast n-2^m}(x_j)\), which would have an error bounded by \(\bar f^{\circledast 2^{m+1}}\). 
Therefore, by setting \(\bar m = \lceil \log_2(n) \rceil\), we have that the error of \(\bar f^{\circledast n}(x_j)\) is bounded by the error of \(\bar f^{\circledast 2^{\bar m}}(x_j)\).

Moving on we have from \cite[Lemma 3]{keich2005sfft}, the bound \eqref{eq:induction} and from our assumptions that
\begin{align*}
    \abs{\fl{w_j \bar f^{\circledast n}(x_j)}-w_j \bar f^{\circledast n}(x_j)} \leq& w_j \bar  f^{\circledast 2^{\bar m}}(x_j) [2 + \left (2^{\bar m+1}-2 \right)N \\ 
    &\hspace{21.25mm}+ \left(1 + \left (2^{\bar m+2}-4 \right)N\right)\varepsilon \\
    &\hspace{21.25mm}+ \left (2^{\bar m+1}-2\right )N\varepsilon^2 ]\varepsilon \\
    \leq& w_j \bar  f^{\circledast 2^{\bar m}}(x_j) \left [2 + \left (2^{\bar m+1}-2 \right)N + \varepsilon + \frac{1}{10} + \frac{\varepsilon}{10}\right ]\varepsilon \\
    \leq& w_j \bar  f^{\circledast 2^{\bar m}}(x_j) \left [\left (2^{\bar m+1}-2 \right)N + 3\right ]\varepsilon.
\end{align*}

Then, by \cite[Lemma 2]{keich2005sfft} we get the following bound for the rounding error of \(\bar \alpha_N\):
\begin{align*}
    \abs{\fl{\bar \alpha_N} - \bar \alpha_N} \leq& \bar \alpha_N \left [\left (2^{\bar m+1}-2 \right)N + N + 3\right ](1+N\varepsilon) \varepsilon \\
    \leq& \bar \alpha_N \left [\left (2^{\bar m+1}-1 \right)N + 3 + \frac{N}{10} + 3N\varepsilon\right]\varepsilon \\
    \leq& \bar \alpha_N 2^{\bar m+1}N\varepsilon \\
    \leq& 4\bar\alpha_NnN\varepsilon 
\end{align*}
which is what we set out to prove.
\end{proof}

This leads to our main convergence result.
\begin{theorem}[Approximation error direct convolution]\label{thm:errorWithRounding}
Let the assumptions in Theorem~\ref{thm:errorEstimate} and Lemma~\ref{lem:roundingErrorDirectConv} hold. 
Then it holds that 
\begin{multline*}
    |\fl{\bar \alpha_N} - \alpha| \le 
(1 + 4 n N \epsilon) \left(\gamma C_1 \overline{C}_2(n,\gamma) N^{-2p} + C_3 \max_{x \in [0,\gamma]} \left|\frac{d^r}{dx^r} f^{*n}(x)\right| N^{-r}\right) \\+ 4 \alpha n N\varepsilon,
\end{multline*}
for all integers $n$ and $N$ such that $N \le \varepsilon/C_5$
and \( (4n)^2 N \varepsilon<1/10\). 
\end{theorem}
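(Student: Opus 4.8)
The plan is to obtain the bound from a single triangle inequality, splitting the total error into the deterministic convolution/quadrature error controlled by Theorem~\ref{thm:errorEstimate} and the floating-point rounding error controlled by Lemma~\ref{lem:roundingErrorDirectConv}. Write
\[
E_N := \gamma C_1 \overline{C}_2(n,\gamma)\, N^{-2p} + C_3 \max_{x \in [0,\gamma]} \left|\frac{d^r}{dx^r} f^{*n}(x)\right| N^{-r}
\]
for the right-hand side of~\eqref{eq:convRate}. The first step is simply
\[
|\fl{\bar\alpha_N} - \alpha| \le |\fl{\bar\alpha_N} - \bar\alpha_N| + |\bar\alpha_N - \alpha| \le |\fl{\bar\alpha_N} - \bar\alpha_N| + E_N,
\]
where the last inequality is Theorem~\ref{thm:errorEstimate}.

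The second step is to bound $|\fl{\bar\alpha_N} - \bar\alpha_N|$ by Lemma~\ref{lem:roundingErrorDirectConv}, which yields $|\fl{\bar\alpha_N} - \bar\alpha_N| \le 4\bar\alpha_N n N \varepsilon$. Here I would check that the hypotheses assumed in the present theorem do imply those of the lemma: the conditions $N \ge 2^{10}$, $\varepsilon \le 2^{-53}$ and the pointwise evaluation bound $|\fl{x}-x|\le xc\varepsilon$ are carried over verbatim, and the only point requiring a line of work is that $(4n)^2 N\varepsilon < 1/10$ implies the lemma's condition $2^{2\bar m+2}N\varepsilon < 1/10$ with $\bar m = \lceil \log_2 n\rceil$; this follows because $2^{\bar m}\le 2n$, so $2^{2\bar m+2} = 4(2^{\bar m})^2 \le 16 n^2 = (4n)^2$. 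The remaining smallness condition $N \le \varepsilon/C_5$ serves only to keep the lower-order remainder terms arising in the proof of Lemma~\ref{lem:roundingErrorDirectConv} negligible and plays no structural role.

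The third step removes $\bar\alpha_N$ from the right-hand side. Since the convolution/quadrature error is already controlled, $\bar\alpha_N \le \alpha + |\bar\alpha_N - \alpha| \le \alpha + E_N$, hence
\[
4\bar\alpha_N n N \varepsilon \le 4\alpha n N \varepsilon + 4 n N \varepsilon\, E_N.
\]
Substituting into the first display gives
\[
|\fl{\bar\alpha_N} - \alpha| \le E_N + 4 n N \varepsilon\, E_N + 4\alpha n N\varepsilon = (1 + 4 n N\varepsilon)\, E_N + 4\alpha n N\varepsilon,
\]
which is exactly the claimed inequality.

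I do not expect a genuine obstacle: the statement is an assembly of two results already established above. The only things to be careful about are (i) verifying that the hypotheses of Theorem~\ref{thm:errorEstimate} and Lemma~\ref{lem:roundingErrorDirectConv} hold simultaneously under the stated smallness conditions on $n$, $N$, $\varepsilon$ — in particular the reduction $2^{2\bar m+2}N\varepsilon\le(4n)^2N\varepsilon$ above — and (ii) the orientation of the final estimate: one bounds $\bar\alpha_N$ \emph{above} by $\alpha + E_N$, so the factor multiplying $E_N$ comes out as $(1+4nN\varepsilon)$ rather than $(1-4nN\varepsilon)^{-1}$, which is what one would get by instead substituting $\alpha \le \bar\alpha_N + E_N$.
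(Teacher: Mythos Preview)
Your argument is correct and follows exactly the route the paper takes: the paper's own proof is a single sentence stating that the result follows from Lemma~\ref{lem:roundingErrorDirectConv} and Theorem~\ref{thm:errorEstimate} via the triangle inequality, and you have simply supplied the details of that assembly, including the substitution $\bar\alpha_N \le \alpha + E_N$ that produces the factor $(1+4nN\varepsilon)$. Your verification that $(4n)^2N\varepsilon<1/10$ implies the lemma's hypothesis $2^{2\bar m+2}N\varepsilon<1/10$ is a welcome addition that the paper leaves implicit.
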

The result follows from 
Lemma~\ref{lem:roundingErrorDirectConv} and Theorem~\ref{thm:errorEstimate} 
and using the triangle inequality.

We continue with a lemma needed for the proceeding result. The lemma is a version of \cite[Lemma 5]{keich2005sfft} and the proof of our lemma is based on the one given in the cited paper. Note first that the discrete version \(\bar f\) of \(f\) can be associated with a vector \(q \in \mathbb{R}_{\geq 0}^N\) by letting \(q_i = \bar f(x_i)\). Then we let
\[\norm{q}_1 \coloneqq \sum_{i=1}^N \abs{q_i} \text{ and } \norm{q}_\infty \coloneqq \max_{1 \leq i \leq N} \abs{q_i}.\]
We also need to define the DFT and IDFT operators, denoted \(D_{N}\) and \(D^{-1}_N\) respectively. Let 
\[D_{N, k,j} = e^{\frac{ikj2\pi}{N}}, \text{ and } D^{-1}_{N, k,j} = \frac{1}{N}e^{\frac{-ikj2\pi}{N}}.\]
We will denote the operators by \(D\) and \(D^{-1}\) when the dimension is clear from the context. 
We are then ready to proceed with the lemma. 
\begin{lemma} 
    \label{lem:singelFFTConvBound}
    Let \(q \in \mathbb{R}_{\geq 0}^N\) where the numerical approximation \(\fl{q} \in \mathbb{R}^N\) satisfies \(\abs{q_i-\fl{q_i}} \leq q_i c_\delta \varepsilon\) for some \(c_\delta \in (0,1]\) and let \(k = \lceil \log_2(N) \rceil + 1\). Assume that \(\fl{q^{\circledast 2}}\) is computed by the method described in \eqref{eq:fftMethod}, that \(13k\varepsilon \leq 1\) as well as \(\abs{\fl{h}-h} = 0\), i.e. that we are able to accurately represent the constant \(h = \frac{\gamma}{N}\) numerically. Furthermore, we also assume that \(\fl{D}\fl{q}\) and its square can be calculated exactly with floating point arithmetic, that is \(\fl{D}\fl{q} = \fl{\fl{D}\fl{q}}\) and \((\fl{D}\fl{q})^2 = \fl{(\fl{D}\fl{q})^2}\). Then
    \[\norm{\fl{q^{\circledast 2}} - q^{\circledast 2}}_\infty \leq 2h(c_\delta+9k)\varepsilon\norm{q}_1^2+ch\varepsilon^2,\]
    where \(c > 0\) is a constant depending on \(k\) and \(\norm{q}^2_1\) capturing higher-order terms of \(\varepsilon\).
\end{lemma}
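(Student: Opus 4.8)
The plan is to follow the argument behind \cite[Lemma 5]{keich2005sfft}: split the total error of the FFT-based convolution into the part caused by perturbing the input $q\mapsto\fl{q}$ and the part caused by evaluating the forward and inverse discrete Fourier transforms in floating point, and bound each piece by a multiple of $h\varepsilon\norm{q}_1^2$. Set $M:=2^k$, so $M\ge 2N$ and, after padding the length-$N$ vectors with zeros as in~\eqref{eq:fftMethod}, the exact infinite-precision Fourier convolution reproduces the direct discrete convolution exactly. Write $\widetilde{q},\widetilde{\fl{q}}\in\mathbb{R}^{M}$ for the zero-padded versions of $q$ and $\fl{q}$, let $D,D^{-1}$ denote the length-$M$ DFT and IDFT operators, and let $u^{\odot 2}$ be the entrywise square of $u\in\mathbb{C}^{M}$. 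With this notation $q^{\circledast 2}=h\,D^{-1}\bigl((D\widetilde{q})^{\odot 2}\bigr)$ on the first $N$ coordinates, whereas the computed quantity is $\fl{q^{\circledast 2}}=\fl{\,h\,\fl{D^{-1}}\bigl((\fl{D}\widetilde{\fl{q}})^{\odot 2}\bigr)\,}$. By hypothesis the forward transform $\fl{D}\widetilde{\fl{q}}$ and the squaring carry no rounding beyond the twiddle-factor errors encoded in $\fl{D}$, while $\fl{D^{-1}}$ carries both twiddle errors and butterfly rounding, and the final multiplication by $h=\fl{h}$ contributes only a per-entry relative factor $1+\cO(\varepsilon)$.

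I would then insert the intermediate vectors $D^{-1}\bigl((D\widetilde{\fl{q}})^{\odot 2}\bigr)$ and $D^{-1}\bigl((\fl{D}\widetilde{\fl{q}})^{\odot 2}\bigr)$ and use the triangle inequality to write
\begin{align*}
\fl{D^{-1}}\bigl((\fl{D}\widetilde{\fl{q}})^{\odot 2}\bigr)-D^{-1}\bigl((D\widetilde{q})^{\odot 2}\bigr)={}&\Bigl(D^{-1}\bigl((D\widetilde{\fl{q}})^{\odot 2}\bigr)-D^{-1}\bigl((D\widetilde{q})^{\odot 2}\bigr)\Bigr)\\
&+\Bigl(D^{-1}\bigl((\fl{D}\widetilde{\fl{q}})^{\odot 2}\bigr)-D^{-1}\bigl((D\widetilde{\fl{q}})^{\odot 2}\bigr)\Bigr)\\
&+\Bigl(\fl{D^{-1}}\bigl((\fl{D}\widetilde{\fl{q}})^{\odot 2}\bigr)-D^{-1}\bigl((\fl{D}\widetilde{\fl{q}})^{\odot 2}\bigr)\Bigr)=:E_1+E_2+E_3,
\end{align*}
where $E_1$ is the input-perturbation term, $E_2$ the forward-twiddle term, and $E_3$ the inverse-transform term; multiplying by $h$ and folding in the $1+\cO(\varepsilon)$ from the final product then gives $\norm{\fl{q^{\circledast 2}}-q^{\circledast 2}}_\infty\le h\bigl(\norm{E_1}_\infty+\norm{E_2}_\infty+\norm{E_3}_\infty\bigr)+h\varepsilon\norm{q}_1^2+\cO(h\varepsilon^2)$.

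For $E_1$ I would use $\abs{(D\widetilde{q})_j}\le\norm{q}_1$, $\norm{\widetilde{\fl{q}}-\widetilde{q}}_1=\norm{\fl{q}-q}_1\le c_\delta\varepsilon\norm{q}_1$ and hence $\abs{(D\widetilde{\fl{q}})_j}\le(1+c_\delta\varepsilon)\norm{q}_1$, together with the identity $a^2-b^2=(a+b)(a-b)$, to obtain $\abs{(D\widetilde{\fl{q}})_j^2-(D\widetilde{q})_j^2}\le(2+c_\delta\varepsilon)c_\delta\varepsilon\norm{q}_1^2$ for every $j$; combined with the estimate $\norm{D^{-1}v}_\infty\le M^{-1}\norm{v}_1$ and summation over the $M$ coordinates this gives $\norm{E_1}_\infty\le(2+c_\delta\varepsilon)c_\delta\varepsilon\norm{q}_1^2$, which is the leading contribution $2hc_\delta\varepsilon\norm{q}_1^2$ after the final scaling by $h$. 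For $E_2$ and $E_3$ I would invoke a standard componentwise error bound for the radix-$2$ FFT over its $k$ butterfly stages, in the spirit of the analysis in \cite{keich2005sfft}: writing $\fl{D}=D+\Delta D$, the hypothesis that $\fl{D}\widetilde{\fl{q}}$ is computed exactly forces $\Delta D$ to collect only the per-stage twiddle rounding, so $\norm{\fl{D}\widetilde{\fl{q}}-D\widetilde{\fl{q}}}_\infty=\cO(k\varepsilon\norm{q}_1)$, and reusing the $a^2-b^2$ step and the $\norm{D^{-1}v}_\infty\le M^{-1}\norm{v}_1$ bound gives $\norm{E_2}_\infty\le c_2 k\varepsilon\norm{q}_1^2+\cO(\varepsilon^2)$; similarly, with $w:=(\fl{D}\widetilde{\fl{q}})^{\odot 2}$ satisfying $\norm{w}_1\le M\norm{q}_1^2\bigl(1+\cO(k\varepsilon)\bigr)$, the inverse-transform bound (twiddle errors plus butterfly rounding over $k$ stages) gives $\norm{\fl{D^{-1}}w-D^{-1}w}_\infty=\cO\bigl(k\varepsilon M^{-1}\norm{w}_1\bigr)$ and hence $\norm{E_3}_\infty\le c_3 k\varepsilon\norm{q}_1^2+\cO(k^2\varepsilon^2)$. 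Arranging the absolute constants in these FFT estimates so that $c_2+c_3+1\le 18$ — which is where the hypothesis $13k\varepsilon\le1$ is used, namely to linearize the accumulated $\bigl(1+\cO(\varepsilon)\bigr)^{\cO(k)}$ factors — and summing the three pieces then yields $\norm{\fl{q^{\circledast 2}}-q^{\circledast 2}}_\infty\le 2h(c_\delta+9k)\varepsilon\norm{q}_1^2+ch\varepsilon^2$, with $c$ absorbing every higher-order remainder and depending only on $k$ and $\norm{q}_1^2$.

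The \emph{main obstacle} is this last step: getting the floating-point FFT and IFFT error constants sharp enough that the two transforms together contribute at most $2\cdot 9k\varepsilon=18k\varepsilon$ relative error, rather than some larger multiple of $k\varepsilon$. This requires either invoking a sufficiently tight componentwise FFT error bound and carefully tracking the constants of complex floating-point arithmetic, or re-deriving such a bound by induction over the $k$ butterfly stages; it is also what forces the precise form of the hypothesis $13k\varepsilon\le1$, so that the geometric-series remainders stay controlled and can be absorbed into $ch\varepsilon^2$.
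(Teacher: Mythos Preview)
Your approach is essentially the paper's, with only a cosmetic regrouping. Both proofs zero-pad to length $2^k$, use $\|Dv\|_\infty\le\|v\|_1$ and $\|D^{-1}v\|_\infty\le 2^{-k}\|v\|_1$, invoke \cite[Lemma~4]{keich2005sfft} (which supplies the explicit constant $6k\varepsilon$) for the DFT/IDFT perturbation, and pass through the squaring via $a^2-b^2=(a+b)(a-b)$. The paper first bounds $\|\fl{D}\fl{q}-Dq\|_\infty\le(c_\delta+6k)\varepsilon\|q\|_1$, i.e.\ it combines your $E_1$ and $E_2$ at the level of the transformed vector, then squares to obtain $\|r-\fl{r}\|_1\le 2^{k+1}(c_\delta+6k)\varepsilon\|q\|_1^2$, and finally splits the inverse step as $D^{-1}(r-\fl{r})+(D^{-1}-\fl{D^{-1}})\fl{r}$. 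Your three-term split instead separates the input perturbation from the forward-twiddle error \emph{before} squaring, but the arithmetic is identical: $2c_\delta+2\cdot 6k+6k=2(c_\delta+9k)$.

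Your ``main obstacle'' is therefore not one. Once you quote \cite[Lemma~4]{keich2005sfft} with its constant $6$, your $c_2=12$ (the forward error doubles under squaring) and $c_3=6$ drop out immediately, and the hypothesis $13k\varepsilon\le1$ is exactly what that lemma requires. There is no need to re-derive a butterfly-level bound; the paper applies the same $6k\varepsilon$ estimate symmetrically to $\fl{D^{-1}}$, so the extra ``butterfly rounding'' you mention for $E_3$ is not part of the model here. Finally, the multiplication by $h=\fl{h}$ is treated in the paper as exact scaling, so no ``$+1$'' enters the $k\varepsilon$ budget.
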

\begin{proof}
    Let \(q \in \mathbb{R}_{\geq 0}^{2^k}\) be a zero-padded version of our original \(q \in \mathbb{R}_{\geq 0}^N\). Note that we for simplicity collect all higher-order terms of \(\varepsilon\) in constants \(c_j\) throughout this proof. We then have that
\begin{align*}
    \norm{\fl{D}\fl{q} - Dq}_\infty \leq& \norm{D(\fl{q} - q)}_\infty + \norm{(D - \fl{D})\fl{q}}_\infty \\
    \leq& \norm{\fl{q} - q}_1 + 6k\varepsilon \norm{\fl{q}}_1 \\
    \leq& c_\delta\varepsilon\norm{q}_1 + 6k\varepsilon(1+c_\delta\varepsilon)\norm{q}_1 \\
    \leq& (c_\delta + 6k)\varepsilon \norm{q}_1 + c_1 \varepsilon^2
\end{align*}
where, for the first inequality, we have used the triangle inequality and the transition from the first to the second line holds due to the fact that \(\norm{Dx}_\infty \leq \norm{x}_1,\, x\in \mathbb{R}^{2^k}\) and \cite[Lemma 4]{keich2005sfft} together with our assumption on \(13k\varepsilon\). The jump from the second to the third line comes from our assumptions on the differences \(\abs{q_i-\fl{q_i}} \leq q_i c_\delta \varepsilon\) as this implies
\[\norm{\fl{q} - q}_1 = \sum_{i=0}^{2^k}\abs{\fl{q(x_i)} - q(x_i)} \leq \sum_{i=0}^{2^k}q(x_i)c_\delta\varepsilon \leq c_\delta\varepsilon\norm{q}_1,\]
and further
\begin{align*}
\norm{\fl{q}}_1 =& \sum_{i=0}^{2^k}\abs{\fl{q(x_i)}} \leq \sum_{i=0}^{2^k} \left(\abs{\fl{q(x_i)}-q(x_i)}+q(x_i)\right) \\
\leq& \norm{\fl{q} - q}_1 + \norm{q}_1 = (1+c_\delta\varepsilon)\norm{q}_1.
\end{align*}
The final transition follows by choosing an appropriate constant \(c_1\). Let now \(r(x)=\left[(Dq)(x)\right]^2\) and similarly \(\fl{r(x)}=\left[(\fl{D} \fl{q})(x)\right]^2\), where we have used our assumption stating that the multiplication of \(\fl{D}\), \(\fl{q}\) and the square of their product can be represented exactly in floating point arithmetic.
Then, we have

\begin{align*}
    \norm{r-\fl{r}}_1 \leq& 2^k \norm{(Dq)^2-(\fl{D} \fl{q})^2}_\infty \\
    \leq& 2^k \left(\norm{(Dq)^2-(\fl{D} \fl{q})(Dq)}_\infty + \norm{(\fl{D} \fl{q})(Dq)-(\fl{D} \fl{q})^2}_\infty \right)\\
    \leq& 2^k \left(\norm{Dq}_\infty\norm{Dq-\fl{D}\fl{q}}_\infty + \norm{\fl{D} \fl{q}}_\infty\norm{Dq-\fl{D}\fl{q}}_\infty \right) \\
    \leq& 2^k \left(\norm{q}_1+\norm{\fl{D}\fl{q}}_\infty \right)\norm{Dq-\fl{D} \fl{q}}_\infty \\
    \leq& 2^k \left(2\norm{q}_1+\norm{Dq-\fl{D}\fl{q}}_\infty \right)\norm{Dq-\fl{D} \fl{q}}_\infty\\
    \leq& 2^k \left(2\norm{q}_1+(c_\delta + 6k)\varepsilon \norm{q}_1 + c_1 \varepsilon^2 \right)\left((c_\delta + 6k)\varepsilon \norm{q}_1 + c_1 \varepsilon^2\right) \\
    \leq& 2^k \left(2(c_\delta + 6k)\varepsilon\norm{q}^2_1+(c_\delta + 6k)^2\varepsilon^2 \norm{q}_1^2 \right) + c_2\varepsilon^2 + c_3\varepsilon^3 + c_4\varepsilon^4 \\
    \leq& 2^k 2(c_\delta+6k)\varepsilon\norm{q}^2_1 + c_5\varepsilon^2
\end{align*}
where we have used the triangle inequality, the bound found above and absorbed the higher-order term in \(\varepsilon\) by an appropriate constant \(c_5\). We also have the following inequality
\[\norm{r}_1 = \sum_{i=0}^{2^k}[Dq(x_i)]^2 \leq \sum_{i=0}^{2^k}\norm{Dq}^2_\infty = 2^k\norm{Dq}^2_\infty \leq 2^k\norm{q}^2_1,\]
which we use in order to show that
\begin{align*}
    \norm{\fl{r}}_1 \leq& \sum_{i=0}^{2^k}\left(\abs{\fl{r(x_i)}-r(x_i)}+r(x_i)\right) \leq \norm{\fl{r} - r}_1 + \norm{r}_1 \\
    \leq& 2^k 2(c_\delta+6k)\varepsilon\norm{q}^2_1 + c_5\varepsilon^2 + 2^k\norm{q}_1^2 = 2^k[1+2(c_\delta + 6k)\varepsilon]\norm{q}_1^2 + c_5\varepsilon^2.
\end{align*}
Moving on, we have
\begin{align*}
    \norm{D^{-1}r-\fl{D^{-1}}\fl{r}}_\infty \leq& \norm{D^{-1}(r-\fl{r})}_\infty + \norm{(D^{-1}-\fl{D^{-1}})\fl{r}}_\infty \\
    \leq& \frac{\norm{r-\fl{r}}_1}{2^k} + \frac{6k\varepsilon \norm{\fl{r}}_1}{2^k}\\
    \leq& 2(c_\delta+6k)\varepsilon\norm{q}^2_1 + c_6\varepsilon^2 \\
    &\hspace{24.5mm} + 6k\varepsilon [1+2(c_\delta + 6k)\varepsilon]\norm{q}_1^2 + c_7\varepsilon^3 \\
    \leq& 2(c_\delta+9k)\varepsilon\norm{q}_1^2+c_8\varepsilon^2
\end{align*}
where again, in the first inequality, we have used the triangle inequality followed by the fact that \(\norm{D^{-1}x}_\infty \leq \frac{\norm{x}_1}{N},\, x \in \mathbb{R}^{2^k}\) and once again we use \cite[Lemma 4]{keich2005sfft} to proceed from the first to the second line. The transition to the last line is done by choosing an appropriate constant \(c_8\). Finally, we have that
\begin{align*}
    \norm{\fl{q^{\circledast 2}}-q^{\circledast 2}}_\infty =& \norm{\fl{h}\left(\fl{D^{-1}}\left[(\fl{D}\fl{q})^{2}\right]\right)-h\left(D^{-1}\left[(Dq)^{2}\right]\right)}_\infty \\ 
    =& \norm{\fl{h}\left(\fl{D^{-1}}\fl{r}\right)-h\left(D^{-1}r\right)}_\infty \\
    \leq& 2h(c_\delta+9k)\varepsilon\norm{q}_1^2+c_8h\varepsilon^2,
\end{align*}
for a suitable constant \(c_8\) that depends on \(c_\delta, k\) and \(\norm{q}_1^2\).
\end{proof}
The next Lemma shows that FFT-based convolution 
may be more sensitive to rounding errors, 
since we can only bound its absolute error. 
\begin{lemma}[Rounding error FFT-based convolution]\label{lem:roundingErrorFftBasedConv}
Let $\bar \alpha_N$ be computed by FFT-based convolution with \(n = 2^m, m\in \mathbb{N}\), let $\fl{\alpha_N}$ be computed with floating point arithmetic with machine epsilon $\varepsilon>0$. Assume further that \(\fl{f(x_j)} = f(x_j)\) for \(j \in \{0, 1, \hdots, N\}\) with \(N = 2^r, r \in \mathbb{N}\), and that \(2m \leq r\). Then, when disregarding higher-order epsilon terms, we have 
\[\norm{\fl{f^{\circledast 2^m}}-f^{\circledast 2^m}}_\infty \leq 18hc\log_2(nN)\log_2(n)\varepsilon  \norm{f}_1^n\]
where \(c = \max\{1, \gamma\}\).
\end{lemma}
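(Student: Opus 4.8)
The plan is to induct on $m$, peeling off the FFT‑based squarings~\eqref{eq:fftMethod} one at a time and tracking the quantity $e_\ell := \norm{\fl{f^{\circledast 2^\ell}} - f^{\circledast 2^\ell}}_\infty$ for $\ell = 1,\dots,m$. For the base case $\ell = 1$ the hypothesis $\fl{f(x_j)} = f(x_j)$ lets me invoke Lemma~\ref{lem:singelFFTConvBound} with $q = \bar f$ and $c_\delta = 0$, yielding $e_1 \le 18\, h\, k\, \varepsilon\, \norm{f}_1^2 + O(\varepsilon^2)$ with $k = \lceil \log_2(2N)\rceil + 1$; the side conditions of that lemma ($13k\varepsilon\le 1$, exact representability of $h = \gamma/N$, of $\fl{D}\fl{q}$ and of its square) are in force because $N = 2^r$ and $\varepsilon$ is tiny. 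Since the FFT in~\eqref{eq:fftMethod} always acts on a vector of length at most $2N$ (the running approximation is truncated to $N+1$ entries and zero‑padded by $N$), one has $k \le \log_2(2N) + 1 \le \log_2(nN) + 1$ throughout, using $n \ge 2$.

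For the inductive step I would split the error of the step $\fl{f^{\circledast 2^\ell}} \mapsto \fl{f^{\circledast 2^{\ell+1}}}$ by the triangle inequality into (i) the \emph{fresh} rounding error of that single FFT‑squaring applied to the stored vector $\fl{f^{\circledast 2^\ell}}$, and (ii) the \emph{propagated} error $\norm{(\fl{f^{\circledast 2^\ell}})^{\circledast 2} - (f^{\circledast 2^\ell})^{\circledast 2}}_\infty$ incurred by feeding in an inexact input. Term (i) is exactly the setting of Lemma~\ref{lem:singelFFTConvBound} with $q = \fl{f^{\circledast 2^\ell}}$, which is trivially exactly representable ($c_\delta = 0$), so it is bounded by $18\, h\, k\, \varepsilon\, \norm{\fl{f^{\circledast 2^\ell}}}_1^2 + O(\varepsilon^2)$. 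Term (ii) I would handle by factoring the difference of squares, $(\fl{f^{\circledast 2^\ell}})^{\circledast 2} - (f^{\circledast 2^\ell})^{\circledast 2} = (\fl{f^{\circledast 2^\ell}} - f^{\circledast 2^\ell})\circledast(\fl{f^{\circledast 2^\ell}} + f^{\circledast 2^\ell})$, and applying the elementary mixed‑norm inequalities $\norm{a\circledast b}_\infty \le h\norm{a}_\infty\norm{b}_1$ and $\norm{a\circledast b}_1 \le h\norm{a}_1\norm{b}_1$; the latter, iterated through the associativity relation~\eqref{eq:associativeDiscreteConvProp}, gives $\norm{f^{\circledast 2^\ell}}_1 \le h^{2^\ell - 1}\norm{f}_1^{2^\ell}$ and, absorbing $e_\ell$, the same bound for $\norm{\fl{f^{\circledast 2^\ell}}}_1$ up to an $O(\varepsilon)$ factor. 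Altogether this produces a linear recursion of the form $e_{\ell+1} \le 18\, h\, k\, \varepsilon\, \big(h^{2^\ell-1}\norm{f}_1^{2^\ell}\big)^2 + 2\, h^{2^\ell}\norm{f}_1^{2^\ell}\, e_\ell + O(\varepsilon^2)$.

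Unrolling the recursion from $\ell = 1$ to $\ell = m$ and discarding all terms of order $\varepsilon^2$ and higher (which the statement explicitly permits), $e_m$ is bounded by a sum of $m = \log_2 n$ contributions, each being a fresh error $18\, h\, k\, \varepsilon\, (h^{2^\ell-1}\norm{f}_1^{2^\ell})^2$ amplified by the propagation factors of the later squarings. Using the crude bounds $h \le c = \max\{1,\gamma\}$, $k \le \log_2(nN)$, and $h^{2^\ell-1}\le 1$ wherever needed, each of these $m$ comparable contributions is dominated by $18\, h\, c\, \log_2(nN)\, \varepsilon\, \norm{f}_1^n$ (the heaviest being the final squaring, where the power of $\norm{f}_1$ reaches $n = 2^m$), which gives $\norm{\fl{f^{\circledast 2^m}} - f^{\circledast 2^m}}_\infty \le 18\, h\, c\, \log_2(nN)\, \log_2(n)\, \varepsilon\, \norm{f}_1^n$.

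The main obstacle, and what makes the bookkeeping delicate, is the norm mismatch in the inductive step: Lemma~\ref{lem:singelFFTConvBound} supplies only an $\infty$‑norm error, whereas the propagation term (ii) naturally calls for a $1$‑norm error of its input, so one must pass between the two norms without letting the growing FFT lengths and the growing $1$‑norms $\norm{f^{\circledast 2^\ell}}_1$ inflate the constants. It is precisely to keep these quantities — together with the per‑step hypothesis $13k\varepsilon\le 1$ of Lemma~\ref{lem:singelFFTConvBound} — under control across all $m$ steps that the assumptions $N = 2^r$, $2m \le r$, and exact representability of the samples $f(x_j)$ are imposed.
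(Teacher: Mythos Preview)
Your recursion-and-unroll strategy is structurally close to the paper's telescoping, but there is a real gap in the step where you declare the $m$ unrolled contributions ``comparable'' with the final squaring heaviest. They are not. Your own propagation estimate for term (ii) carries an unavoidable factor $2$: with $A_\ell := 2h\norm{f^{\circledast 2^\ell}}_1 \le 2(h\norm{f}_1)^{2^\ell}$ and $B_\ell := 18hk\varepsilon\,\norm{f^{\circledast 2^\ell}}_1^2$, the recursion $e_{\ell+1}\le B_\ell + A_\ell e_\ell$ unrolls to
\[
e_m \;\le\; \sum_{\ell=0}^{m-1} B_\ell \prod_{j=\ell+1}^{m-1} A_j
\;\le\; \sum_{\ell=0}^{m-1} 2^{\,m-\ell-1}\cdot 18k\varepsilon\, h^{\,n-1}\norm{f}_1^{\,n}
\;=\;(n-1)\cdot 18k\varepsilon\, h^{\,n-1}\norm{f}_1^{\,n}.
\]
All $m$ terms share the \emph{same} power $\norm{f}_1^{n}$ after propagation, so the ``heaviest'' one is the first ($\ell=0$), amplified by $2^{m-1}$, not the last; and their sum produces the linear factor $n-1$, not $\log_2 n$. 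This does not yield the stated bound $18hc\log_2(nN)\log_2(n)\varepsilon\norm{f}_1^{n}$ in general (take e.g.\ $\gamma$ of order $1$ so that $h^{n-2}$ cannot compensate for $n-1$).

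The paper circumvents this compounding by \emph{not} iterating the real-space difference of squares. It telescopes as $\fl{f^{\circledast 2^m}}-f^{\circledast 2^m}=\sum_{i=0}^{m-1}\bigl[(\fl{g_{m-i-1}^{\circledast 2}})^{\circledast 2^i}-(g_{m-i-1}^{\circledast 2})^{\circledast 2^i}\bigr]$ and then bounds each propagated term in the \emph{Fourier domain}, factoring $(D\fl{\bar g})^{2^i}-(D\bar g)^{2^i}$ as a single difference of $2^i$th powers. That produces a factor $2^{2i}$ rather than a compounded $2^{m-\ell-1}$, and crucially this $2^{2i}$ is killed by $h$ via $2^{2i}h\le 2^{2m}h\le Nh=\gamma$, which is precisely where the hypothesis $2m\le r$ and the constant $c=\max\{1,\gamma\}$ enter. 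Your sketch invokes $2m\le r$ only to keep $k$ bounded, but the assumption is actually needed to absorb the exponential propagation factor; without a Fourier-side argument (or some equivalent device), the real-space recursion cannot reach the $\log_2(n)$ bound.

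A smaller point: when you reapply Lemma~\ref{lem:singelFFTConvBound} at step $\ell\ge 1$ with $q=\fl{f^{\circledast 2^\ell}}$ and $c_\delta=0$, note that the lemma as stated requires $q\in\mathbb{R}_{\ge 0}^N$ and $c_\delta\in(0,1]$; the FFT output $\fl{f^{\circledast 2^\ell}}$ need not be nonnegative. This is a technicality, but you should flag why the conclusion of the lemma still applies.
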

\begin{proof}
Note that by applying recurrently the triangle inequality and by our assumption that \(\fl{f} = f\) we have
\[\norm{\fl{f^{\circledast 2^m}}-f^{\circledast 2^m}}_\infty \leq \sum_{i=0}^{m-1}\norm{\fl{\fl{f^{\circledast 2^{m-i-1}}}^{\circledast 2}}^{\circledast 2^i}-\left (\fl{f^{\circledast 2^{m-i-1}}}^{\circledast 2}\right)^{\circledast 2^i}}_\infty\]
For the sake of lighter notation we let \(g_i = \fl{f^{\circledast 2^i}}\), as we then can rewrite the right-hand side of the equation above as
\begin{equation}
\label{eq:bound_FFT_re_sum}
    \sum_{i=0}^{m-1}\norm{\fl{g_{m-i-1}^{\circledast 2}}^{\circledast 2^i}-\left (g_{m-i-1}^{\circledast 2}\right)^{\circledast 2^i}}_\infty.
\end{equation}
We now need to find an expression for each term in the sum above. First, for readability we introduce the shorthand notation \(\bar g \coloneqq g_{m-i-1}^{\circledast 2}\) for some arbitrary value of \(i \in \{1, \hdots, m-1\}\) and let \(\bar g\) be zero-padded such that \(\bar g \in \mathbb{R}^{2^{r+i}}\). Note then that
\begin{multline}
    \label{eq:D_fl_conv_est}
    \norm{D\fl{\bar g}}_\infty = \norm{D\fl{\bar g}-D\bar g+D\bar g}_\infty \leq \norm{D\fl{\bar g}-D\bar g}_\infty+\norm{D\bar g}_\infty \\
    \leq \norm{\fl{\bar g}-\bar g}_1+\norm{\bar g}_1. 
\end{multline}
Furthermore, we have that
\begin{align*}
\norm{(D\fl{\bar g})^{2^i}-(D\bar g)^{2^i}}_\infty \leq& \norm{(D\fl{\bar g})^{2^i}-(D\fl{\bar g})^{2^i-1}(D\bar g)}_\infty \\
&+ \norm{(D\fl{\bar g})^{2^{i-1}}(D\bar g)-(D\fl{\bar g})^{2^i-2}(D\bar g)^2}_\infty \\
&+ \cdots + \norm{(D\fl{\bar g})(D\bar g)^{2^{i-1}}-(D\bar g)^{2^i}}_\infty \\
\leq& \bigg(\norm{(D\fl{\bar g})^{2^i-1}}_\infty + \norm{(D\fl{\bar g})^{2^i-2}(D\bar g)}_\infty \\
&\hspace{1mm}+ \cdots + \norm{(D\fl{\bar g})(D\bar g)^{2^i-2}}_\infty \\
&\hspace{1mm}+ \norm{(D\bar g)^{2^i-1}}_\infty \bigg)\norm{D(\fl{\bar g}-\bar g)}_\infty \\
\leq& \bigg(\norm{D\fl{\bar g}}^{2^i-1}_\infty + \norm{D\fl{\bar g}}_\infty^{2^i-2}\norm{D\bar g}_\infty \\
&\hspace{1mm}+ \cdots + \norm{D\fl{\bar g}}_\infty\norm{D\bar g}^{2^i-2}_\infty + \norm{D\bar g}^{2^i-1}_\infty \bigg)\norm{\fl{\bar g}-\bar g}_1.
\end{align*}
Then from \eqref{eq:D_fl_conv_est} and Lemma \ref{lem:singelFFTConvBound} we have that 
\[\norm{D\fl{\bar g}}^s_\infty \leq (\norm{\fl{\bar g}-\bar g}_1 + \norm{\bar g}_1)^s \leq \left(18h(r+i+1)\varepsilon\norm{g_{m-i-1}}_1^2+ch\varepsilon^2 + \norm{\bar g}_1\right)^s.\]
Thus, we end up with \(\norm{D\fl{\bar g}}^s_\infty \leq \norm{\bar g}^s_1 + c\varepsilon\), where \(c\) captures all terms multiplied with \(\varepsilon\). Then we can write
\begin{align*}
\norm{(D\fl{\bar g})^{2^i}-(D\bar g)^{2^i}}_\infty \leq& \bigg(\norm{\bar g}_1^{2^i-1} + \norm{\bar g}^{2^i-1}_1 + \cdots \\
&+ \norm{\bar g}_1^{2^i-1} + \norm{\bar g}_1^{2^i-1} + c\varepsilon\bigg)2^i\norm{\fl{\bar g}-\bar g}_\infty \\
\leq& 2^i\norm{\bar g}_1^{2^i-1}2^i 18h(r+i+1)\varepsilon\norm{g_{m-i-1}}_1^{2} + c\varepsilon^2\\
\leq& 18 h 2^{2i} \norm{g_{m-i-1}}_1^{2^{i+1}-2} (r+i+1) \varepsilon\norm{g_{m-i-1}}_1^{2} + c\varepsilon^2 \\
\leq& 18\gamma (r+i+1)\varepsilon\norm{g_{m-i-1}}_1^{2^{i+1}} + c\varepsilon^2,
\end{align*}
where \(c\) is a constant capturing the higher-order terms in \(\varepsilon\) that we adjust appropriately from line to line and from our assumption that \(2m \leq r\) we have \(2^{2i}h \leq \gamma\) as well as the fact that \(\norm{g_{m-i-1}^{\circledast 2}}_1 \leq \norm{g_{m-i-1}}_1^2\). We can then show that 
\begin{align*}
    \norm{\fl{\bar g}^{\circledast 2^i}-\bar{g}^{\circledast 2^i}}_\infty =& \norm{h D^{-1}\left[(D\fl{\bar g})^{2^i}-(D\bar g)^{2^i}\right]}_\infty \\
    \leq& h\frac{\norm{(D\fl{\bar g})^{2^i}-(D\bar g)^{2^i}}_1}{2^{i+r}} \\
    \leq& h\norm{(D\fl{\bar g})^{2^i}-(D\bar g)^{2^i}}_\infty \\
    \leq& 18h\gamma(r+i+1)\varepsilon\norm{g_{m-i-1}}_1^{2^{i+1}} + c\varepsilon^2.
\end{align*}

We can now apply this inequality in order to get bounds on the terms in the sum in \eqref{eq:bound_FFT_re_sum}. Consider first the term in \eqref{eq:bound_FFT_re_sum} where \(i=0\), by applying Lemma \ref{lem:singelFFTConvBound} we have
\[\norm{\fl{g_{m-1}^{\circledast 2}}-\left (g_{m-1}^{\circledast 2}\right)}_\infty \leq 18h(r+1)\varepsilon\norm{g_{m-1}}_1^2.\]
Then, moving on to the case \(i \geq 1\) we have from the bound above that
\[
\norm{\fl{g_{m-i-1}^{\circledast 2}}^{\circledast 2^i}-\left (g_{m-i-1}^{\circledast 2}\right)^{\circledast 2^i}}_\infty \leq 18h\gamma (r+i+1)\varepsilon\norm{g_{m-i-1}}_1^{2^{i+1}} + c\varepsilon^2.
\]
By inserting the above estimates in \eqref{eq:bound_FFT_re_sum} we achieve the estimate
\begin{align*}
    \norm{\fl{f^{\circledast 2^m}}-f^{\circledast 2^m}}_\infty \leq& 18h(r+1)\varepsilon\norm{g_{m-1}}_1^2 \\
    &+ \sum_{i=1}^{m-1} 18h\gamma (r+i+1)\varepsilon\norm{g_{m-i-1}}_1^{2^{i+1}} + c\varepsilon^2 \\
    \leq& 18h(r+1)\varepsilon\norm{g_{m-1}}_1^2 \\
    &+ 18h\gamma (r+m)\varepsilon \sum_{i=1}^{m-1} \norm{g_{m-i-1}}_1^{2^{i+1}} + c\varepsilon^2.
\end{align*}
Thus, when only considering the leading term in \(\varepsilon\) we get
\begin{multline*}
    \norm{\fl{f^{\circledast 2^m}}-f^{\circledast 2^m}}_\infty \leq 18h\log_2(nN)\varepsilon \bigg(\norm{\fl{f^{\circledast 2^{m-1}}}}_1^2\\
    + \gamma\sum_{i=1}^{m-1} \norm{\fl{f^{\circledast 2^{m-i-1}}}}_1^{2^{i+1}}\bigg).
\end{multline*}

By now recursively applying this relation on the norms on the left hand side and ignoring higher-order terms in \(\varepsilon\) we are able to rewrite the equation above as
\begin{align*}
    \norm{\fl{f^{\circledast 2^m}}-f^{\circledast 2^m}}_\infty \leq& 18h\log_2(nN)\varepsilon \left(\norm{f}_1^{2^m} + \gamma\sum_{i=1}^{m-1} \norm{f}_1^{2^m}\right) \\
    \leq& 18hc\log_2(nN)\log_2(n)\varepsilon \norm{f}_1^n, \\
\end{align*}
where \(c = \max\{\gamma,1\}\).
\end{proof}

We are then ready to prove the following theorem, giving a bound on the error of performing convolution using FFT.
\begin{theorem}[Approximation error FFT-based convolution]
\label{thm:roundingErrorFFT}
Let the assumptions in Theorem~\ref{thm:errorEstimate} and Lemma~\ref{lem:roundingErrorFftBasedConv} hold. 
Then it holds that 
\begin{multline*}
|\fl{\bar \alpha_N} - \alpha| \le 
\gamma C_1 \overline{C}_2(n,\gamma) N^{-2p} + C_3 \max_{x \in [0,\gamma]} \left|\frac{d^r}{dx^r} f^{*n}(x)\right| N^{-r} \\
+ 18hc\log_2(nN)\log_2(n)\varepsilon \norm{f}_1^n).
\end{multline*}
\end{theorem}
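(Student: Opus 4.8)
The plan is to obtain the bound from a single application of the triangle inequality,
\[
|\fl{\bar\alpha_N} - \alpha| \le |\fl{\bar\alpha_N} - \bar\alpha_N| + |\bar\alpha_N - \alpha|,
\]
treating the two summands separately. The second summand is exactly the deterministic quadrature error, so under the standing hypotheses (\(f \in C_0^{2p}([0,\gamma])\) and a non-negative-weight Newton--Cotes rule of order \(r \le 2p\)) it is controlled verbatim by Theorem~\ref{thm:errorEstimate}, contributing \(\gamma C_1 \overline{C}_2(n,\gamma) N^{-2p} + C_3 \max_{x\in[0,\gamma]}\bigl|\tfrac{d^r}{dx^r} f^{*n}(x)\bigr| N^{-r}\). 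Everything then reduces to bounding the pure rounding error \(|\fl{\bar\alpha_N} - \bar\alpha_N|\).

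For that term I would use that \(n = 2^m\), so \(\bar f^{\circledast n} = \bar f^{\circledast 2^m}\) is produced entirely by the FFT-based squaring recursion~\eqref{eq:fftMethod}, and that \(\bar\alpha_N = \sum_{j=0}^N w_j \bar f^{\circledast n}(x_j)\) with non-negative weights satisfying \(\sum_{j=0}^N w_j = \gamma\). Comparing this sum term by term with its floating-point evaluation, the error splits into (i) the propagated density error \(\fl{\bar f^{\circledast 2^m}(x_j)} - \bar f^{\circledast 2^m}(x_j)\), weighted by \(w_j\), and (ii) the rounding incurred in forming the \(N+1\) products \(w_j\fl{\bar f^{\circledast 2^m}(x_j)}\) and accumulating their sum. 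Part (i) is at most \(\bigl(\sum_j w_j\bigr)\,\|\fl{f^{\circledast 2^m}} - f^{\circledast 2^m}\|_\infty = \gamma\,\|\fl{f^{\circledast 2^m}} - f^{\circledast 2^m}\|_\infty\), which by Lemma~\ref{lem:roundingErrorFftBasedConv} is \(\le 18 h\,c\,\log_2(nN)\log_2(n)\,\varepsilon\,\|f\|_1^n\) once the extra factor \(\gamma\) is absorbed into the constant \(c = \max\{1,\gamma\}\) (or, if sharper bookkeeping is wanted, \(c = \max\{1,\gamma^2\}\)). Part (ii) is handled exactly as in the propagation estimates of \cite[Lemmas 2 and 3]{keich2005sfft} already invoked in Lemma~\ref{lem:roundingErrorDirectConv}: it is bounded by \(\bar\alpha_N\) times a factor of order \(N\varepsilon\) and is therefore of strictly higher order in \(\varepsilon\), hence negligible under the same "disregarding higher-order \(\varepsilon\) terms" convention used in Lemma~\ref{lem:roundingErrorFftBasedConv}. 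Summing the contributions of (i), (ii) and the quadrature error gives the claimed inequality.

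The only genuinely non-mechanical point is part (i): one must verify that the sup-norm density error from Lemma~\ref{lem:roundingErrorFftBasedConv} transfers cleanly to the scalar \(\bar\alpha_N\), which is precisely what the non-negativity and normalization \(\sum_j w_j = \gamma\) of the closed Newton--Cotes weights buy us, and one must confirm that the additional rounding in the final weighted quadrature sum does not enter at leading order. Both are routine given the weight normalization and the standing smallness assumptions on \(\varepsilon\), \(N\), \(n\); no new estimate beyond Theorem~\ref{thm:errorEstimate} and Lemma~\ref{lem:roundingErrorFftBasedConv} is required.
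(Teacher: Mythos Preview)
Your proposal is correct and follows essentially the same approach as the paper: split by the triangle inequality into the quadrature error handled by Theorem~\ref{thm:errorEstimate} and the rounding error handled by Lemma~\ref{lem:roundingErrorFftBasedConv}. In fact, the paper's own proof is a single line ("The result follows directly from Theorem~\ref{thm:errorEstimate} and Lemma~\ref{lem:roundingErrorFftBasedConv}"), so your discussion of how the sup-norm density bound passes to $\bar\alpha_N$ via the non-negative weights with $\sum_j w_j = \gamma$, and why the final quadrature-sum rounding is of higher order in $\varepsilon$, actually supplies more detail than the paper gives.
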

\begin{proof}
    The result follows directly from Theorem~\ref{thm:errorEstimate} and Lemma~\ref{lem:roundingErrorFftBasedConv}.
\end{proof}

\subsection{Computational cost}

In this section we compare the computational cost and accuracy of 
direct-based convolution against FFT-based convolution 
as a function of the numerical resolution $N$ and the number of RVs $n$. We restrict ourselves to settings where the lemmas and theorems in Section~\ref{sec:theory} apply. 

\begin{theorem}\label{thm:costEstimate}
  The computational cost of computing $\bar \alpha_N$, counted in the number of floating point operations,
  is
  \[
    \text{COST}(\bar \alpha_N) =
    \begin{cases}
      \mathcal{O}(\log_2(n) N^2) &      \text{when using direct convolution}\\
      \mathcal{O}(\log_2(n) N \log_2(N)) & \text{when using FFT-based convolution.} 
      \end{cases}
  \]
 \end{theorem}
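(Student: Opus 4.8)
The plan is to count floating-point operations by tracing through the algorithm described in Section~\ref{sec:effComp}, treating the two convolution implementations separately but in parallel. The key observation is that computing $\bar f^{\circledast n}$ reduces, via the doubling scheme~\eqref{eq:discLinConv}, to a bounded number of discrete linear convolutions of $\mathbb{R}^{N+1}$-vectors, and then~\eqref{eq:newtonCotesAlpha} contributes only an $\mathcal{O}(N)$ weighted sum on top. So the whole cost is (number of convolutions) $\times$ (cost per convolution) $+\,\mathcal{O}(N)$, and the two cases differ only in the second factor.

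First I would establish the number of convolutions. With $m = \lfloor \log_2(n)\rfloor$, the squaring recursion~\eqref{eq:discLinConv} performs one convolution for each $\ell = 2,\ldots,m$, i.e.\ $m-1$ convolutions, to obtain $\bar f^{\circledast 2^m}$. If $2^m \neq n$, one builds $\bar f^{\circledast(n-2^m)}$ in at most $m-1$ further convolutions (again by a doubling scheme on the smaller exponent) and then performs one final convolution $\bar f^{\circledast 2^m}\circledast \bar f^{\circledast(n-2^m)}$. In all cases the total number of discrete convolutions is $\mathcal{O}(m) = \mathcal{O}(\log_2 n)$. Next I would pin down the cost of a single discrete convolution of two $\mathbb{R}^{N+1}$-vectors: for direct convolution, the sliding sum $h\sum_{j=0}^k \bar g_1(x_j)\bar g_2(x_k-x_j)$ for $k=0,\ldots,N$ is the standard polynomial-multiplication cost $\mathcal{O}(N^2)$ (each of the $N+1$ output entries costs $\mathcal{O}(N)$ multiply--adds, plus the $\mathcal{O}(N)$ scalings by $h$); for FFT-based convolution, the zero-padding to length $2N+1$ (or the next power of two, still $\mathcal{O}(N)$), two FFTs / one inverse FFT each at cost $\mathcal{O}(N\log_2 N)$, and the $\mathcal{O}(N)$ pointwise squaring and $h$-scaling, give $\mathcal{O}(N\log_2 N)$ in total.

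Multiplying the $\mathcal{O}(\log_2 n)$ convolutions by the per-convolution cost and adding the $\mathcal{O}(N)$ from the final Newton--Cotes sum~\eqref{eq:newtonCotesAlpha} yields $\mathcal{O}(\log_2(n) N^2)$ and $\mathcal{O}(\log_2(n) N\log_2 N)$ respectively, since in both cases the additive $\mathcal{O}(N)$ is dominated. The only mild subtlety — and the place I would be most careful — is the bookkeeping for the non-power-of-two case: one must check that truncating intermediate vectors back to length $N+1$ after each convolution (as in the Matlab snippets) keeps every convolution between vectors of length $\Theta(N)$, so that no single step costs more than $\mathcal{O}(N^2)$ (resp.\ $\mathcal{O}(N\log_2 N)$), and that assembling $\bar f^{\circledast(n-2^m)}$ genuinely takes only $\mathcal{O}(\log_2 n)$ convolutions rather than $\mathcal{O}(n)$. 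This is routine but is the one spot where a naive implementation could secretly blow up the count, so I would state it explicitly; everything else is immediate from the structure of~\eqref{eq:discLinConv}.
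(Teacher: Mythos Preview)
Your proposal is correct and follows essentially the same approach as the paper's own proof: count the $\mathcal{O}(\log_2 n)$ convolutions produced by the doubling scheme, multiply by the per-convolution cost ($\mathcal{O}(N^2)$ or $\mathcal{O}(N\log_2 N)$), and note that the final Newton--Cotes sum adds only $\mathcal{O}(N)$. If anything, you are more careful than the paper about the non-power-of-two case and the truncation bookkeeping, which the paper simply asserts in one line.
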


 \begin{proof}
   Recall that $m$ denotes the largest integer such that
  $m \le \log_2(n)$.  For each $\ell =1,\ldots,m$ and
  $k =0,1,\ldots, N$, the computation
  $\bar f^{\circledast 2^{\ell}}(x_k) = \bar f^{\circledast 2^{\ell-1}} \circledast \bar
  f^{\circledast 2^{\ell-1}}(x_k)$ costs $\mathcal{O}(N)$. The cost of computing
  $\bar f^{\circledast n}$ thus becomes $\mathcal{O}( m N^2)$ and computing
  $\bar \alpha$ adds an additional (relatively speaking, negligible)
  cost of $\mathcal{O}(N)$. The upper bound in cost for FFT-based convolution follows by a similar argument. 
  \end{proof}

When disregarding factors of $\log_2(N)$ in the cost estimate
and also disregarding rounding errors, we obtain the following relation between cost and absolute approximation error:
\begin{equation}\label{eq:costVsErrorNonRelative}
|\bar \alpha_N - \alpha | \le 
   \begin{cases} 
    \frac{\widehat C}{(\text{COST})^{r/2}} & \text{when using direct convolution}\\
    \frac{\widehat C}{(\text{COST})^{r}} & \text{when using FFT-based convolution.} 
   \end{cases}
\end{equation}
Supposing further that there exists a constant $C>0$ such that 
\[
\frac{ (1 + 4 n N \epsilon) \left(\gamma C_1 \overline{C}_2(n,\gamma) N^{r-2p} + C_3 \max_{x \in [0,\gamma]} \left|\frac{d^r}{dx^r} f^{*n}(x)\right|\right) }{\alpha}  \le C 
\]
holds for all relevant $\gamma, n, N$ and $\varepsilon$, we obtain  
the following error estimate for the relative error of approximating $\alpha$: 
\begin{equation}\label{eq:costVsError}
\frac{|\bar \alpha_N - \alpha |}{\alpha} \le 
   \begin{cases} 
    C N^{-r} + nN \varepsilon & \text{for direct convolution}\\
    C N^{-r} + C_6\frac{hc \log_2(nN)\log_2(n) \norm{f}_1^{n}\varepsilon}{\alpha} & \text{for FFT-based convolution}. 
   \end{cases}
\end{equation}
We note that when $\alpha \ll 1$, 
the result indicates that for a given resolution $N$, the relative error may be 
substantially smaller for direct-based convolution than for FFT-based convolution, precisely as we observe in the numerical examples in Section~\ref{sec:numerics}.

\section{Numerical experiments}\label{sec:numerics}

To verify numerically that the proposed method produces satisfactory results and to confirm that the theoretical error rate identified in the previous section holds in practice, we conducted a series of experiments. First, in Section \ref{sec:convVsFFT} we compare the FFT implementation of the convolution method with the direct method in terms of how well they are able to approximate the rare-event probability of a sum of RVs. As the results from the first experiment shows that the direct method gives low rounding errors, we run the rest of the experiments using the direct method only. In Section \ref{sec:estKnownDistr}, we look at how well the convolution method estimate the CDF for the sum of RVs for which the distribution of the sum is indeed known. Then, in Section \ref{sec:Lognormal}, we consider the Log-Normal distribution with two goals in mind: 1) We want to explore the convergence properties of the convolution method and check if we empirically are able to observe the theoretical convergence rate as given by Theorem \ref{thm:errorEstimate}, 2) We compare the calculated estimates of the CDF with approximations calculated using an alternative method, in this instance a saddlepoint method presented in \cite{Asmussen16}. Then, in the last Section \ref{sec:estUnknownDistr} we look at how the convolution method performs when approximating the CDF for the sum of RVs for other distributions where the distribution of the sum is not known.

\subsection{Comparison of direct- and FFT-based convolution}\label{sec:convVsFFT}
In this section, we compare the performance of direct convolution and FFT-based 
convolution for left-tail rare-event estimation. In agreement with the theoretical results in Section~\ref{sec:roundingErrors}, we show that FFT-based convolution is more
sensitive to rounding errors than direct convolution in two problem settings where $\alpha \ll1$. 

\subsubsection{Log-Normal distribution}
We estimate the probability of $Y = \sum_{i=1}^{16} X_i \le \gamma$, where $X_i$ for $i=1,2,\cdots,16$ are i.i.d $\text{Log-Normal}(0,1/64)$ with density denoted by $f$. The large variation in magnitude for the density of $Y$ is illustrated in the left plot of 
Figure~\ref{fig:l_pdfFFT_vs_conv_r_pdfFFT_vs_conv_runtime}, where we numerically have computed $p(y):= \bar{f}^{\circledast 16}(y)$ over the interval $y \in [8,16]$ using $N=10^6$ quadrature points. The 
density is computed by direct convolution with Matlab's \textbf{conv()} function and 64-bit floating point precision, and by the FFT-based method for a range of different floating point precisions, using the multiple precision toolbox~\cite{advanpix2006multiprecision}. We observe that the higher the precision, the better the FFT-based convolution approximates the direct convolution's density, and that FFT introduces an approximation error 
that is proportional to the machine epsilon.
This is consistent with the observations in~\cite{wilson2017accurate}.
For reference, we note that the machine epsilon is approximately $1.19 \times 10^{-7}$ for 32-bit floats, $2.22 \times 10^{-16}$ for 64-bit floats, $1.93 \times 10^{-34}$ for 128-bit floats, and $1.81\times 10^{-71}$ for 256-bit floats. 
\begin{figure}
\center
  \includegraphics[width=0.49\linewidth]{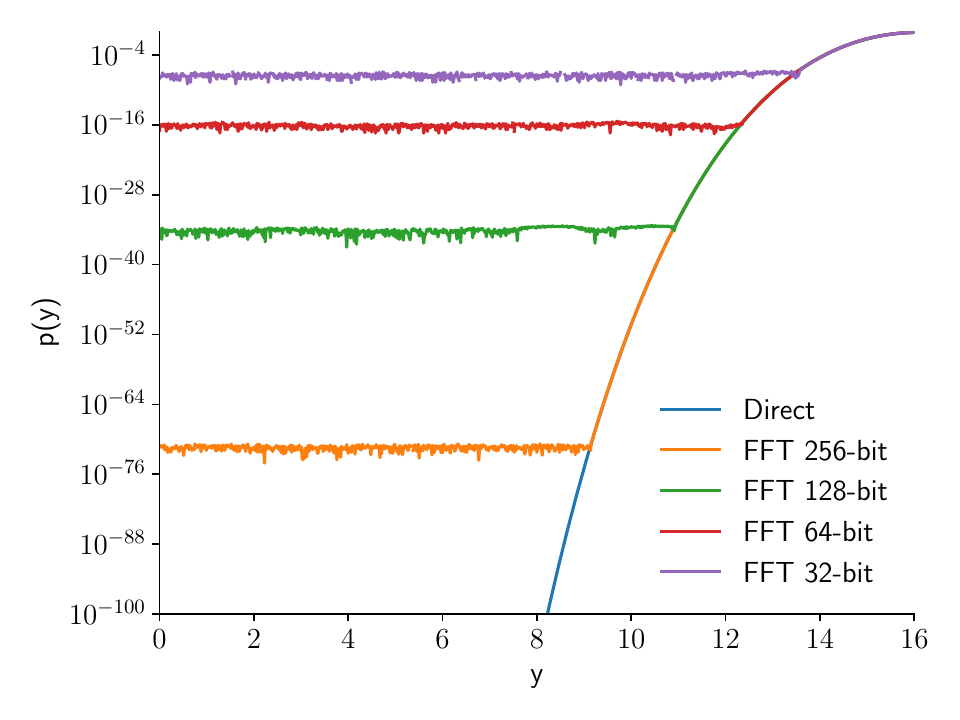}
  \includegraphics[width=0.49\linewidth]{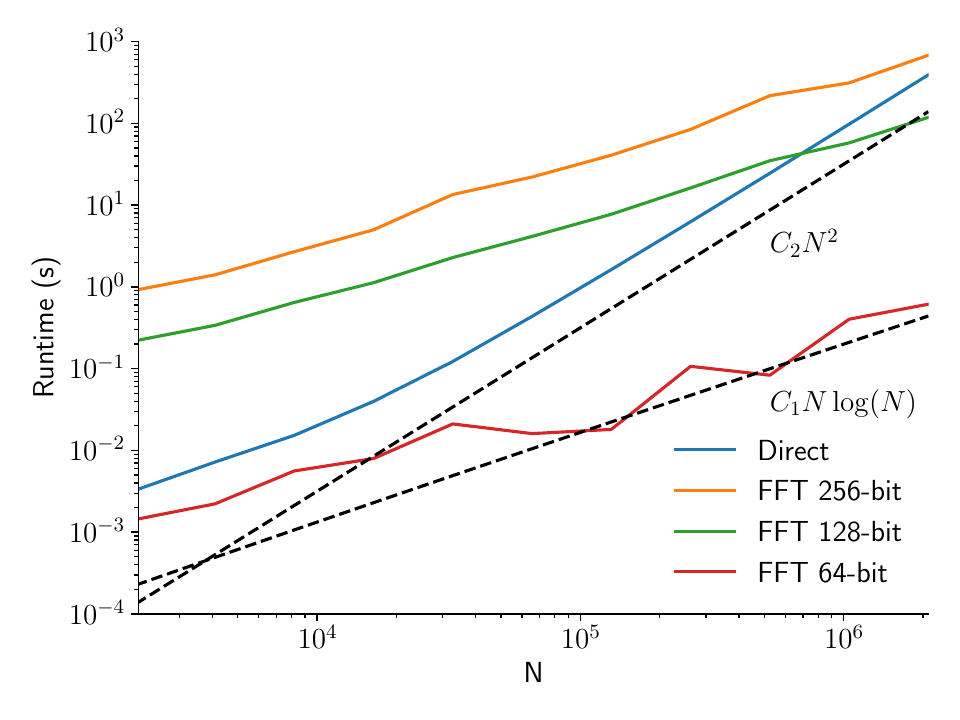}
  \caption{Left: The probability density function $p(y) = \bar f^{\circledast 16}(y)$ for direct convolution and FFT-based convolution for the rare-event problem studied 
  in Section~\ref{sec:convVsFFT}. Right: The runtime for direct convolution and FFT-based convolution for the rare-event problem studied 
  in Section~\ref{sec:convVsFFT}.}
  \label{fig:l_pdfFFT_vs_conv_r_pdfFFT_vs_conv_runtime}
\end{figure}
 Table~\ref{tab:convVsFFT_lognormal} presents the relative error $|\alpha - \fl{\bar \alpha_N}|/\alpha$ for different values of $\gamma$ using 64-bit precision direct convolution and FFT-based convolution for a range of precisions. All methods use $N=10^6$ quadrature points. The pseudo-reference solution is computed using $N=2^{21}$ quadrature points with 512-bit precision FFT-based convolution. The FFT-based convolution only approximates the rare event well when $\alpha$ is much larger than the machine epsilon for the floating point precision. 

\begin{table}
\scriptsize
\centering
  \begin{tabular}{c c c c c c c c}
    \hline
    $\gamma$    & Ref.~sol.~CDF           & Dir.~conv. & Saddlp. & FFT 32-bit & FFT 64-bit & FFT 128-bit & FFT 256-bit         \\
    \hline
8.8 & 2.05$\times10^{-83}$ & 4.99$\times10^{-13}$ & 4.90$\times10^{-06}$ & 4.92$\times10^{+75}$ & 1.67$\times10^{+67}$ & 3.58$\times10^{+49}$ & 1.48$\times10^{+11}$\\
9.6 & 1.02$\times10^{-61}$ & 5.77$\times10^{-13}$ & 5.46$\times10^{-06}$ & 5.34$\times10^{+53}$ & 1.27$\times10^{+45}$ & 1.25$\times10^{+28}$ & 1.01$\times10^{-10}$\\
10.4& 1.04$\times10^{-44}$ & 5.47$\times10^{-13}$ & 5.61$\times10^{-06}$ & 1.66$\times10^{+37}$ & 2.23$\times10^{+28}$ & 1.02$\times10^{+11}$ & 2.24$\times10^{-28}$\\
11.2& 1.76$\times10^{-31}$ & 6.00$\times10^{-13}$ & 5.24$\times10^{-06}$ & 1.74$\times10^{+24}$ & 2.94$\times10^{+15}$ & 1.56$\times10^{-03}$ & 1.46$\times10^{-40}$\\
12  & 2.45$\times10^{-21}$ & 5.81$\times10^{-13}$ & 4.27$\times10^{-06}$ & 4.83$\times10^{+13}$ & 3.20$\times10^{+04}$ & 1.91$\times10^{-13}$ & 1.16$\times10^{-50}$\\
12.8& 9.81$\times10^{-14}$ & 5.74$\times10^{-13}$ & 2.72$\times10^{-06}$ & 1.96$\times10^{+05}$ & 2.37$\times10^{-03}$ & 1.87$\times10^{-20}$ & 3.80$\times10^{-59}$\\
13.6& 3.03$\times10^{-08}$ & 5.97$\times10^{-13}$ & 6.88$\times10^{-07}$ & 5.38$\times10^{+00}$ & 1.23$\times10^{-08}$ & 3.93$\times10^{-26}$ & 1.79$\times10^{-63}$\\
14.4& 1.63$\times10^{-04}$ & 6.02$\times10^{-13}$ & 1.66$\times10^{-06}$ & 3.01$\times10^{-03}$ & 4.37$\times10^{-13}$ & 1.09$\times10^{-29}$ & 5.45$\times10^{-68}$\\
\hline   
  \end{tabular}
  \caption{Comparison of the relative error $|\fl{\bar \alpha_N} - \alpha|/\alpha$ for sums of Log-Normal-distributed RVs using 64-bit direct convolution, the saddlepoint method (computed with 512-bit floating bit precision) and FFT-based convolution computed with four different floating point precisions.}
  \label{tab:convVsFFT_lognormal}
\end{table}

 Our numerical studies indicate that FFT-based convolution is much more sensitve to rounding errors than direct convolution, but it may still be useful in computations when the number of quadrature points $N$ is large since the method has a lower asymptotic computational cost than direct convolution, cf.~Theorem~\ref{thm:costEstimate}. The right plot in Figure~\ref{fig:l_pdfFFT_vs_conv_r_pdfFFT_vs_conv_runtime} measures the 
 computational cost of the two convolution methods in runtime, displays cost 
 rates that are consistent with Theorem~\ref{thm:costEstimate}, and shows that for all considered floating point precisions, FFT-based convolution will eventually, for sufficiently large $N$, outperform direct convolution in terms of runtime. 

\subsubsection{L{\'e}vy distribution}

We next estimate the probability of $Y = \sum_{i=1}^{16} X_i \le \gamma$, with $X_i$, $i=1,2,\cdots,16$ are i.i.d. $\text{L{\'e}vy}(0, 0.1)$ whose density is denoted by $f$ (given in Table~\ref{tab:density_known}). This is a stable distribution, which is very suitable for validation of the numerical methods since its
density and CDF are known: $Y \sim \text{L{\'e}vy}(0, 25.6)$ 
and $\alpha = \mathbb{P}(Y \le \gamma) = \text{erfc}( \sqrt{12.8/\gamma})$.
The large variation in magnitude for the density of $Y$ is illustrated in Figure~\ref{fig:pdfDirConvVsFFT_levy}, where we compare the exact density of $Y$ to numerical approximations using $N=10^6$ quadrature points.
\begin{figure}[h!]
\centering
  \includegraphics[width=0.58\linewidth]{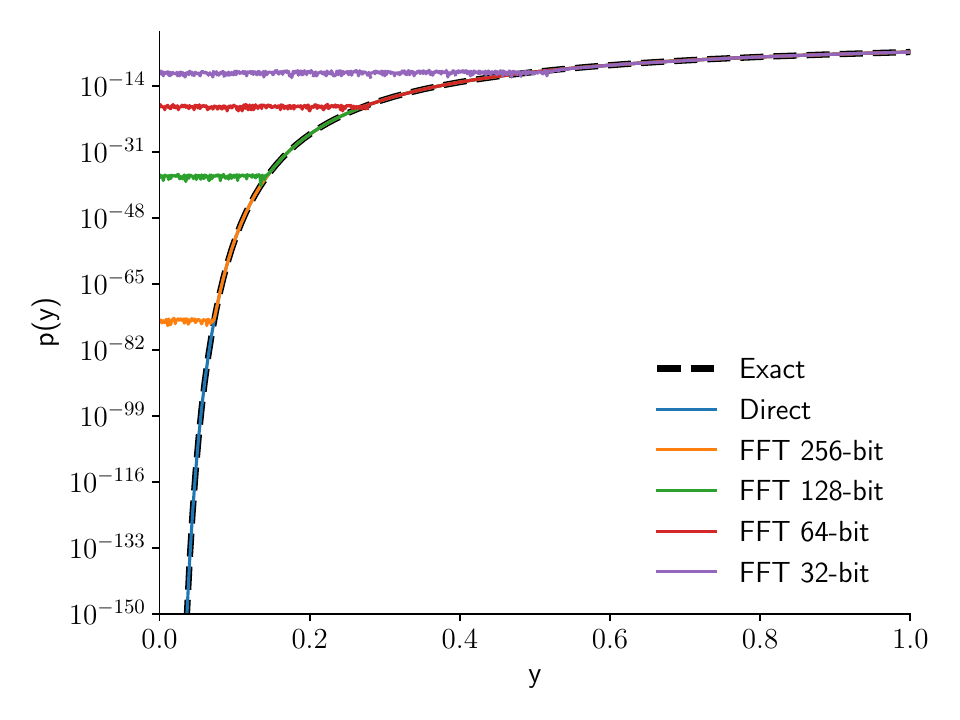}
  \caption{The probability density function 
  to $Y = \sum_{i=1}^{16}X_i$ where $X_i$
  are i.i.d.~L\'evy distributed RVs. The density is computed by the exact formula, by direct convolution and FFT-based convolution.}
  \label{fig:pdfDirConvVsFFT_levy}
\end{figure}
The density is computed numerically by direct convolution using 64-bit precision and FFT-based convolution for a range of floating point precisions. Remarkably, direct convolution agrees well with the exact density over the full range of values displayed, while FFT-based convolution agrees well only when the precision is sufficiently high. 

Table~\ref{tab:convVsFFT_levy} presents the relative error $|\alpha - \fl{\bar \alpha_N}|/\alpha$ for different values of $\alpha$ using 64-bit precision direct convolution and FFT-based convolution for a range of precisions. All numerical methods use $N=10^6$ quadrature points. The reference solution is computed by the exact CDF, $\alpha = \text{erfc}( \sqrt{12.8/\gamma})$ with 512-bit floating point precision. We again observe that FFT-based convolution only approximates the rare event well when $\alpha$ is much larger than the floating point precision machine epsilon.
\begin{table}
\small
\centering
  \begin{tabular}{c c c c c c c }
    \hline
    $\gamma$    & $\alpha=\mathbb{P}(Y \le \gamma)$   & Dir.~conv.  & FFT 32-bit & FFT 64-bit & FFT 128-bit & FFT 256-bit         \\
    \hline
0.05 & 2.33$\times10^{-113}$ & 6.74$\times10^{-13}$ & 7.82$\times10^{+68}$ & 6.12$\times10^{+50}$ & 2.16$\times10^{+33}$ & 1.58$\times10^{-04}$\\
0.1 & 1.28$\times10^{-57}$ & 6.78$\times10^{-13}$ & 7.45$\times10^{+25}$ & 6.05$\times10^{+15}$ & 3.58$\times10^{-02}$ & 7.47$\times10^{-27}$\\
0.2 & 1.12$\times10^{-29}$ & 6.05$\times10^{-13}$ & 1.15$\times10^{+09}$ & 2.61$\times10^{-02}$ & 3.45$\times10^{-18}$ & 9.20$\times10^{-29}$\\
0.5 & 8.34$\times10^{-13}$ & 4.24$\times10^{-13}$ & 2.68$\times10^{-02}$ & 1.51$\times10^{-10}$ & 6.18$\times10^{-28}$ & 1.54$\times10^{-31}$\\
1 & 4.20$\times10^{-07}$ & 2.80$\times10^{-13}$ & 4.95$\times10^{-05}$ & 1.24$\times10^{-13}$ & 2.22$\times10^{-32}$ & 1.11$\times10^{-34}$\\
\hline   
  \end{tabular}
  \caption{Comparison of the relative error $|\fl{\bar \alpha_N} - \alpha|/\alpha$ for sums of L\'evy-distributed 
  RVs using 64-bit direct convolution and FFT-based convolution computed with four different floating point precisions.}
  \label{tab:convVsFFT_levy}
\end{table}

\subsection{Estimating known distributions}
\label{sec:estKnownDistr}
There exist several probability distributions for which the distribution of the sum $Y = \sum_{i=1}^nX_i$ is known, given that the RVs $X_i$ are all independent. If for example $X_i$ is Chi-squared distributed with $r_i$ degrees of freedom ($X_i \sim \chi^2(r_i)$) for $i \in \{1, 2, \hdots, n\}$ we have that
\[Y \sim \chi^2 \left (\sum_{i=1}^n r_i\right ).\]
Thus, to check empirically that the algorithm presented in this paper indeed is able to accurately calculate the PDF and the CDF of the sum of RVs we check against distributions where the resulting distribution is known. We chose to do numerical experiments with the Chi-squared ($\chi^2$) and Lévy distributions. The PDFs are given in Table \ref{tab:density_known}. In the case of the Lévy distribution if we let $X_i \sim \text{Lévy}(\mu_i, c_i)$ with $\mu_i \in (-\infty, \infty), c_i > 0$ for $i \in \{1, 2, \hdots, n\}$ we have that
\[Y \sim \text{Lévy} \left (\sum_{i=1}^n\mu_i, \left(\sum_{i=1}^n \sqrt{c_i} \right)^2 \right ).\]

\begin{table}
    \centering
    \begin{tabular}{ccc}
    \hline
    Distribution            & Parameters         & PDF \\
    \hline
    Chi-squared $(\chi^2)$  & $df \in\mathbb{N}$  & $\frac{1}{2^{df/2} \Gamma(df/2)}x^{df/2-1}e^{-x/2}$\\
    Lévy                    & $c > 0$            & $\sqrt{\frac{c}{2\pi}}\frac{e^{-\frac{c}{2x}}}{x^{3/2}}$\\
    \hline
    \end{tabular}
    \caption{Probability density function for the Chi-squared and Lévy distributions}
    \label{tab:density_known}
\end{table}

For these experiments we are estimating the value $\alpha = F_Y(\gamma), \gamma = xn$ with $x = 0.05$ and $n = 16$, i.e.
\[\alpha = F_Y(0.8) = \mathbb{P}(Y \leq 0.8) = P \left (\sum_{i=1}^{16} X_i \leq 0.8 \right )\]
where $F_Y$ is the CDF of $Y = \sum_{i=1}^{16} X_i$ with the RVs $X_i, i \in \{1, 2, \hdots, 16\}$ all independent. The estimates $\bar \alpha$ are calculated using equation \eqref{eq:newtonCotesAlpha} with Boole's rule as the closed Newton-Cotes formula in the last step (see Table \ref{tab:newton_cotes_formulas} for an overview of the weights).
\begin{figure}
\center
  \includegraphics[width=0.49\linewidth]{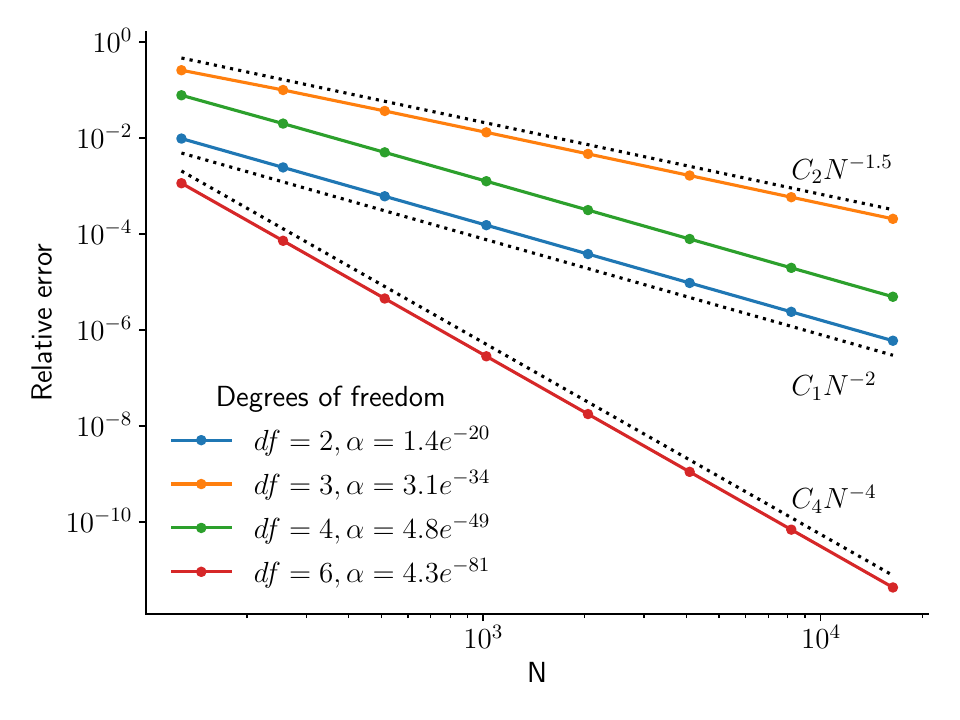}
  \includegraphics[width=0.49\linewidth]{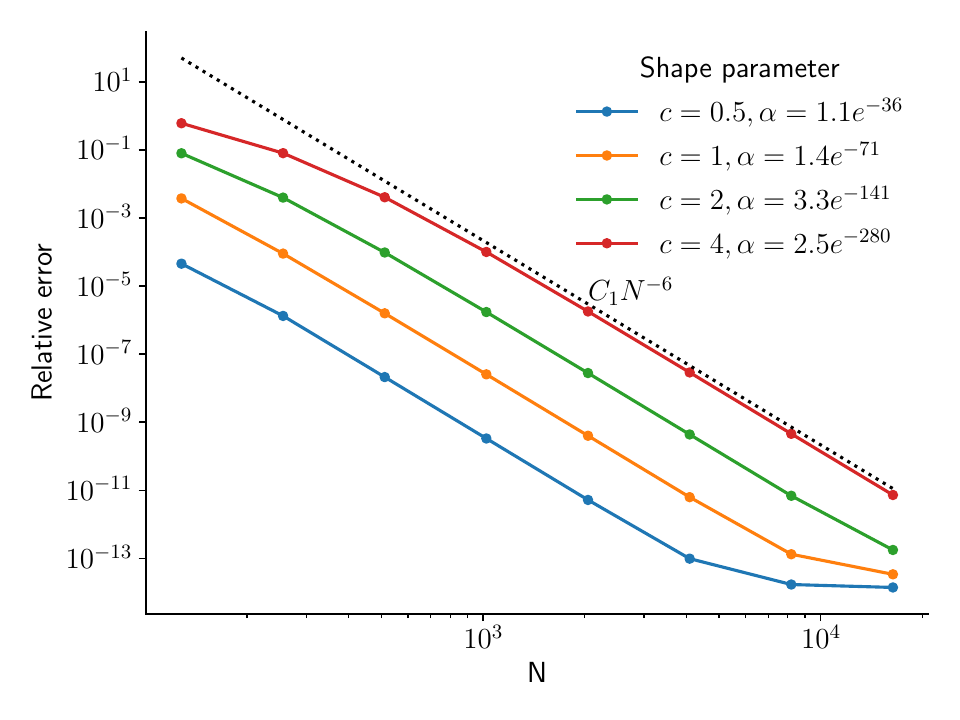}
  \caption{Left: Relative error as a function of the mesh-size when estimating $F_Y(0.8), Y \sim \chi^2(16df)$. Right: Relative error as a function of the mesh-size when we are estimating $F_Y(0.8), Y \sim \text{Lévy} \left (0, \left(\sum_{i=1}^n \sqrt{c_i} \right)^2 \right )$}
  \label{fig:l_chi_squared_r_levy}
\end{figure}
In the left plot of Figure \ref{fig:l_chi_squared_r_levy} we display the relative error $\delta = \frac{\abs{\alpha - \bar \alpha}}{\alpha}$ as a function of the mesh size $N$ when estimating $\alpha$ with $X_i \sim \chi^2(df)$ for a number of different parameter values $df$. Note that the legend also display the value of \(\alpha\), showing that we indeed are estimating rare events. It is apparent from the figure that there is a large difference in the convergence rate depending on the value of $df$, with the convolution method converging faster to the correct value of $\alpha$ when the value of $df$ increase. It is straight forward to check that \(f_3'(x) \xrightarrow{x \to 0}\infty\), \(f_2'(0) = c_1\), \(f_4'(0) = c_2\) and \(f_6'(0) = 0\) for some constants \(c_1, c_2 \in \mathbb{R}\) and where \(f_{df}\) is the PDF for the \(\chi^2\)-distribution with parameter value \(df\). Note also that \(f_6''(0) = c_3\) for some \(c_3 \in \mathbb{R}\). The observed convergence rates are therefore coherent with the theory, with the notable exception of the case \(df = 3\). We suspect that the bad convergence rate is a result of \(f_3'(x) \xrightarrow{x \to 0}\infty\) . Note also that \(f_2(0) \neq 0\) thus our implementation of the convolution method used for this experiment utilize the formula given in \ref{eq:discLinConv2}.

The results of a similar experiment where $X_i \sim \text{Lévy}(0, c)$ for a number of different values for the shape parameter $c$ is shown in the right plot in Figure \ref{fig:l_chi_squared_r_levy}. From the figure we see that the convolution method performs really well when estimating the Lévy distribution, with the relative error $\delta$ being less than $10^{-9}$ for all values of $c$ that we tested when the mesh-size \(N > 10^4\). Furthermore, we observe from the figure that a convergence rate of about \(N^{-6}\) were attained for all choices of \(c\). In contrast with the \(\chi^2\)-distribution we have \(f^{(k)}(0) = 0\) for all \(k \in \mathbb{N}\) for our choices of \(c\). The result in Theorem \ref{thm:errorEstimate} therefore implies that we should observe a convergence rate of at least \(N^{-6}\) as we use Boole's rule for the integration in the last step. This fits well with the observed convergence rate for all choices of \(c\). Furthermore, we observe that the relative error flattens out when we reach errors bellow \(10^{-13}\), which is probably due to round-off errors and in agreement with with the errors observed in Section \ref{sec:convVsFFT}.

\subsection{Convergence properties for the convolution method}
\label{sec:Lognormal}
In this subsection we again consider the problem of estimating the value
\[\alpha = F_Y(\gamma)\]
where $\gamma = xn$ with $x = 0.7$ and $n = 16$, i.e.
\[\alpha = F_Y(11.2) = \mathbb{P}(Y \leq 11.2) = \mathbb{P} \left (\sum_{i=1}^{16} X_i \leq 11.2 \right ),\]
where $F_Y$ is the CDF of $Y = \sum_{i=1}^{16} X_i$ and $X_i \sim \text{Log-Normal}(0, \sigma^2), i \in \{1, \hdots, 16\}$ are i.i.d. Here, we let $\sigma = 0.125$. The PDF is given in Table \ref{table1}. The approximations $\bar\alpha$ are calculated using equation \eqref{eq:newtonCotesAlpha}. In this experiment we aim to observe how the convergence rate vary when utilize three different closed Newton-Cotes formulas in the last step, Trapezoid, Simpson and Boole. The weights \(w_j\) in equation \eqref{eq:newtonCotesAlpha} depend on the used Newton-Cotes formula. The resulting formula for each of the Newton-Cotes formulas utilized in this experiment is given in Table \ref{tab:newton_cotes_formulas}. We also compare our estimates of the CDF value \(\alpha\) and the PDF value \(f(11.2)\) with approximations generated by a saddlepoint method presented in \cite{Asmussen16}. For details on the saddlepoint method we refer the reader to the cited paper. Furthermore, the estimate of \(f(11.2)\) is numerically found by the value \(f^{\circledast n}(x_N)\). 

\begin{table}
    \centering
    \begin{tabular}{cp{8cm}}
    \hline
    Newton-Cotes formula & Formula \\
    \hline
    Trapezoid & $h \left (\frac{1}{2} (\bar f^{\circledast n}(x_0) + \bar f^{\circledast n}(x_N)) + \sum_{j \in \{1, 2, \hdots, N-1\}}\bar f^{\circledast n}(x_j) \right )$\\
    Simpson   & $\frac{h}{3} \Big (\bar f^{\circledast n}(x_0) + \bar f^{\circledast n}(x_N) + 4\sum_{j \in \{1, 3, \hdots, N-1\}} \bar f^{\circledast n}(x_j) \newline
    \hbox{\hspace{4mm}}+ 2\sum_{j \in \{2, 4, \hdots, N-2\}}\bar f^{\circledast n}(x_j) \Big )$\\
    Boole's   & $\frac{2h}{45} \Big (7\big(\bar f^{\circledast n}(x_0) + \bar f^{\circledast n}(x_N)\big) \newline
    \hbox{\hspace{5mm}}+ 32 \sum_{j \in \{1, 3, \hdots, N-1\}} \bar f^{\circledast n}(x_j) \newline
    \hbox{\hspace{5mm}}+ 12 \sum_{j \in \{2, 6, \hdots, N-2\}} \bar f^{\circledast n}(x_j) \newline
    \hbox{\hspace{5mm}}+ 14 \sum_{j \in \{4, 8, \hdots, N-4\}} \bar f^{\circledast n}(x_j) \Big )$\\
    \hline
    \end{tabular}
    \caption{Explicit Newton-Cotes formulas}
    \label{tab:newton_cotes_formulas}
\end{table}

In order to test the convergence rate using the different rules in the last step we apply an iterative scheme where we first calculate a pseudo-reference solutions $\bar \alpha_{N_M}$ using Boole's rule on a mesh of size $N_M$ for some large $N_M$. We then calculate estimates $\bar \alpha_{N_m}$ using the three different rules on a mesh of size $N_m$, where $N_m \ll N_M$, and calculate the relative error
\[\bar \delta = \frac{\abs{\bar\alpha_{N_M} - \bar \alpha_{N_m}}}{\alpha_{N_M}}\]
for each of the estimates. We then check if all of the calculated relative errors are smaller than some threshold $\epsilon > 0$. If this is not the case we repeat the calculation on a mesh of size $N_m' = 2N_m$. This is repeated until all calculated errors are bellow the given threshold. Here we set $N_M = 2^{17}, N_m = 2^{10}$ and $\epsilon = 10^{-8}$. 

The results are shown in the left plot of Figure \ref{fig:convLognorm} together with reference slopes showing the theoretical convergence rates and the relative error of the above mentioned saddlepoint method. Given the fact that for the PDF \(f\) of the Log-Normal distribution we have \(f^{(k)}(0) = 0 \) for all \(k \in \mathbb{N}\), we would from Theorem \ref{thm:errorEstimate} expect convergence rates similar to the convergence rate of the chosen Newton-Cotes formula. Our results are in accordance with this expectation as the convergence rates are \(2, 4\) and \(6\) for the Trapezoid, Simpson and Boole's respectively. We also see that we quickly achieve relative errors smaller than the saddlepoint method with all three rules.

\begin{figure}
\center
  \includegraphics[width=0.49\linewidth]{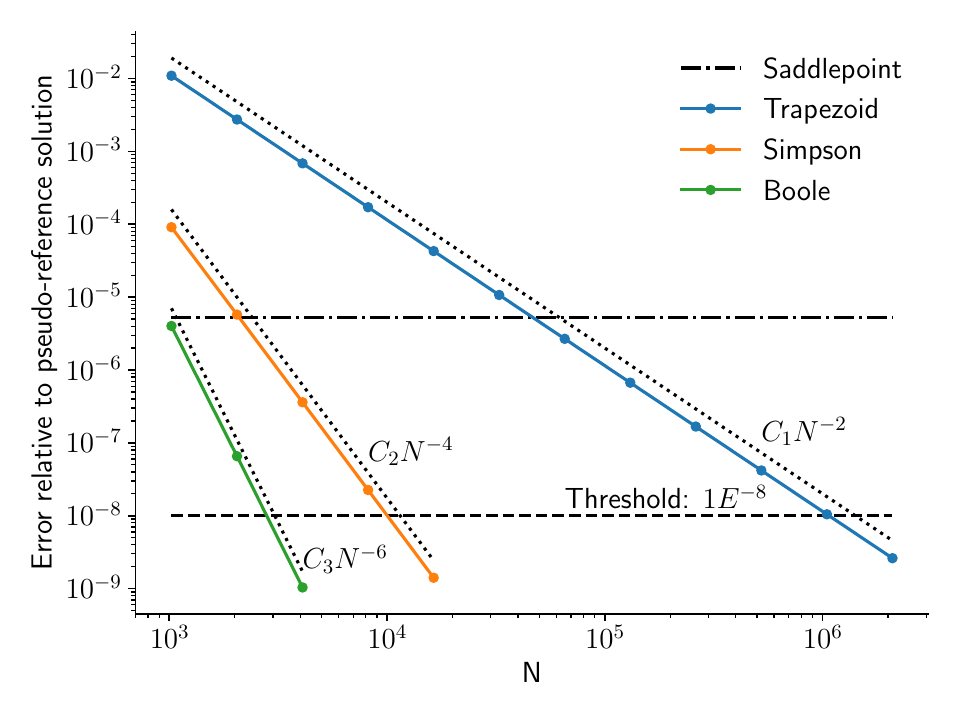}
  \includegraphics[width=0.49\linewidth]{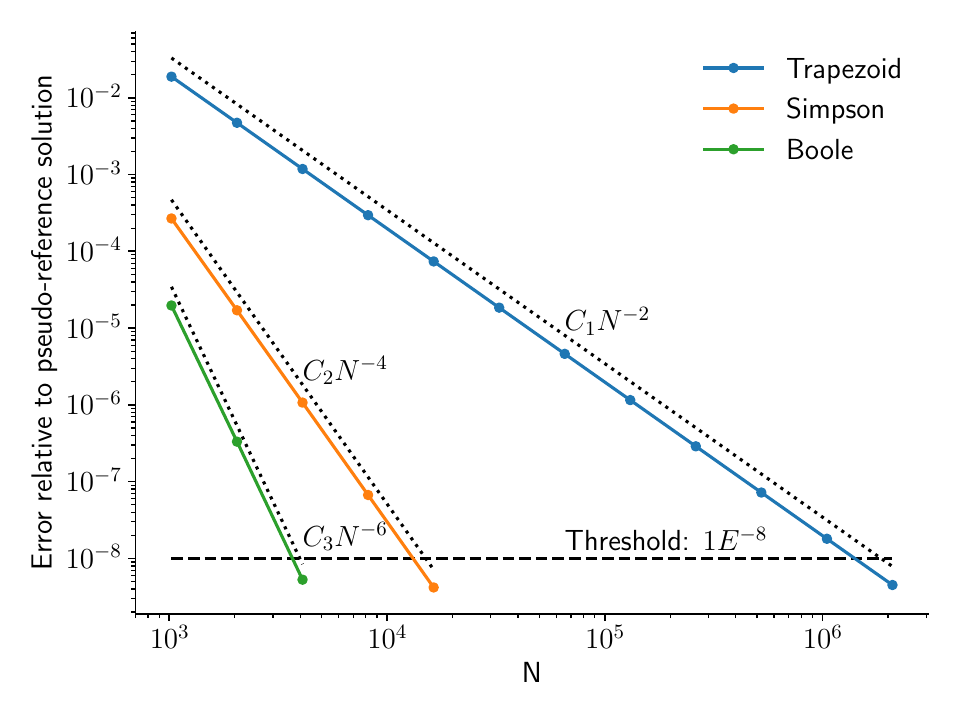}
  \caption{Left: Relative error as a function of the mesh-size when approximating $\alpha_{N_M}$ in the case when \(X_i \sim \text{Log-Normal}(0, 0.125)\) where (\(\alpha_{N_M}\) is a pseudo-reference solution calculated using the convolution method with $N_M = 1e6$ using Boole's rule in the last step. Right: Relative error as a function of the mesh-size when estimating pseudo-reference solution $\alpha_{N_M}$ of the CDF of a sum of Log-Normals with varying $\sigma$ calculated using the convolution method with $N_M = 1e6$ using Boole's rule in the last step}
  \label{fig:convLognorm}
\end{figure}

In order to further validate the correctness of the convolution method when applied to the Log-Normal distribution, we compared estimates of the CDF and PDF for different values of \(x\) generated with our method with approximations calculated by the saddlepoint method from \cite{Asmussen16}. For the convolution method we utilized a mesh-size of $10^4$ and Boole's rule in the last step, while the saddelpoint approximations were calculated using our implementation of the saddlepoint method presented in \cite{Asmussen16}. The results are given in \ref{tab:resultsLognormalSum}. Note that the values calculated with our implementation of the saddlepoint method gives slightly different values compared with the values listed in the paper \cite{Asmussen16} (\(\pm 1\) in the third digit of the significand).

The results show that the two methods gives similar approximations of $\alpha$ for all chosen values of $x$, with only the fourth non-zero decimal being different between the two estimates in some cases. This indicates that the convolution method is indeed able to accurately estimate the desired probabilities. Both our implementation of the convolution method and the saddlepoint method has for this experiment a negligible CPU-time. The advantage with the convolution method as opposed to the more intricate saddlepoint method is that the convolution method is generic in the way that it can handle multiple distributions and that the RVs does not need to be identically distributed. However, for the convolution method when a larger mesh size $N$ is needed to get accurate estimates the cost increases by $\mathcal{O}(N^2)$ and when the number $n$ of RVs in the sum increases the complexity increase by $\mathcal{O}(\log_2(n))$. On the other hand, the saddlepoint method has a negligible computational cost.

\begin{table}
\center
  \begin{tabular}{c c c c c}
    \hline
    x    & Convolution CDF           & Saddle CDF                & Convolution PDF           & Saddle PDF             \\
    \hline
    0.70 & 1.761 $\times 10^{-31}$ & 1.761 $\times 10^{-31}$ & 5.873 $\times 10^{-30}$ & 5.873 $\times 10^{-30}$ \\
    0.80 & 9.806 $\times 10^{-14}$ & 9.806 $\times 10^{-14}$ & 1.829 $\times 10^{-12}$ & 1.829 $\times 10^{-12}$ \\
    0.85 & 3.031 $\times 10^{-8}$  & 3.031 $\times 10^{-8}$  & 3.975 $\times 10^{-7}$  & 3.975 $\times 10^{-7}$  \\
    0.90 & 1.631 $\times 10^{-4}$  & 1.631 $\times 10^{-4}$  & 1.388 $\times 10^{-3}$  & 1.388 $\times 10^{-3}$  \\
    0.91 & 5.955 $\times 10^{-4}$  & 5.955 $\times 10^{-4}$  & 4.577 $\times 10^{-3}$  & 4.577 $\times 10^{-3}$  \\
    0.92 & 1.911 $\times 10^{-3}$  & 1.911 $\times 10^{-3}$  & 1.318 $\times 10^{-2}$  & 1.318 $\times 10^{-2}$  \\
    0.93 & 5.423 $\times 10^{-3}$  & 5.423 $\times 10^{-3}$  & 3.332 $\times 10^{-2}$  & 3.332 $\times 10^{-2}$  \\
    0.94 & 1.368 $\times 10^{-2}$  & 1.368 $\times 10^{-2}$  & 7.416 $\times 10^{-2}$  & 7.416 $\times 10^{-2}$  \\
    0.95 & 3.081 $\times 10^{-2}$  & 3.081 $\times 10^{-2}$  & 1.460 $\times 10^{-1}$  & 1.460 $\times 10^{-1}$  \\
    0.98 & 1.901 $\times 10^{-1}$  & 1.901 $\times 10^{-1}$  & 5.520 $\times 10^{-1}$  & 5.520 $\times 10^{-1}$  \\
    \hline    
  \end{tabular}
  \caption{Approximations of the CDF and PDF of Y}
  \label{tab:resultsLognormalSum}
\end{table}

Furthermore, we wanted to empirically test the performance of the convolution method when employing it on a sum of independent Log-Normals that are not identically distributed. We therefore perform a experiment similar to the ones described in above, but instead estimate
\[\alpha = F_Y(11.2), \text{ with }Y = \sum_{i=1}^{16} X_i,\]
where $X_i \sim \text{Log-Normal}(0, \sigma_i)$ and $\sigma_i = \frac{1}{2^{2+j}}$ with $j = i \mod 4$ for $i \in \{1, 2, \hdots, 16\}$. The result is shown in the right plot of Figure \ref{fig:convLognorm}. From the graphs it is apparent that the convolution method performs well in this case as well.

\subsection{Estimating unknown distributions}
\label{sec:estUnknownDistr}
In this subsection we explore the convergence properties of the convolution method for Nakagami-m and the Rice distribution. The PDFs are given in Table \ref{table1}. Note that we let \(\Omega = 1\) for the Nakagami-m, while we for the Rice distribution use an alternative parameterization where we let \(K = \frac{\nu^2}{2}\) and \(\Omega = \nu^2 + 2\). Similarly to the Log-Normal distribution, which was the topic of the former subsection \ref{sec:Lognormal}, we do not know the exact distribution of a sum of i.i.d Nakagami-m or Rice RVs. We run the same experiment as before, utilizing the convolution method to estimate the value 
\[\alpha = F_Y(0.8) = \mathbb{P}(Y \leq 0.8) = \mathbb{P} \left (\sum_{i=1}^{16} X_i \leq 0.8 \right ),\]
with \(X_i\), $i=1,2,\cdots, 16$, are  i.i.d RVs drawn from either the Nakagami-m distribution or the Rice distribution. We first calculate a pseudo-reference solution \(\bar \alpha\) using a mesh consisting of \(N = 2^{20}\) intervals. We then calculate estimates \(\bar \alpha_{m_i}, i = 7, 8, \hdots, 15\) where we utilize a mesh of size \(N = 2^i\) in order to calculate \(\bar \alpha_{m_i}\). The relative error of the approximation relative to the pseudo-reference solution is then calculated by \[\delta_{m_i} = \frac{\abs{\bar \alpha - \bar \alpha_{m_i}}}{\bar \alpha}\]

For the Nakagami-m case we see from the left plot in Figure \ref{fig:l_nakagami_r_rice} that we end up with convergence rates of \(N^{-6}, N^{-4}\) and \(N^{-2}\) when choosing parameter values \(m=3, m=2\) and \(m=1\) respectively. These observations are consistent with Theorem \ref{thm:errorEstimate} as the pdf of the Nakagami-m distribution is zero at zero for all derivatives up to and including the forth derivative when \(m=3\), while the same is true up to the second derivative for \(m=2\). For the case \(m=1\) the first derivative is not zero at zero. 

The last distribution we will consider is the Rice distribution. The result from the numerical experiment is shown in the right plot of Figure \ref{fig:l_nakagami_r_rice}. Here the relative error more or less coincide for all tested parameter values. We also note that the empirically observed convergence rate is \(N^{-2}\). This also agree with the result from Theorem \ref{thm:errorEstimate} (see Remark \ref{rem:f_not_zero_at_zero}) as the first derivative of the pdf of the Rice distribution is not zero at zero. 
\begin{figure}
\center
  \includegraphics[width=0.49\linewidth]{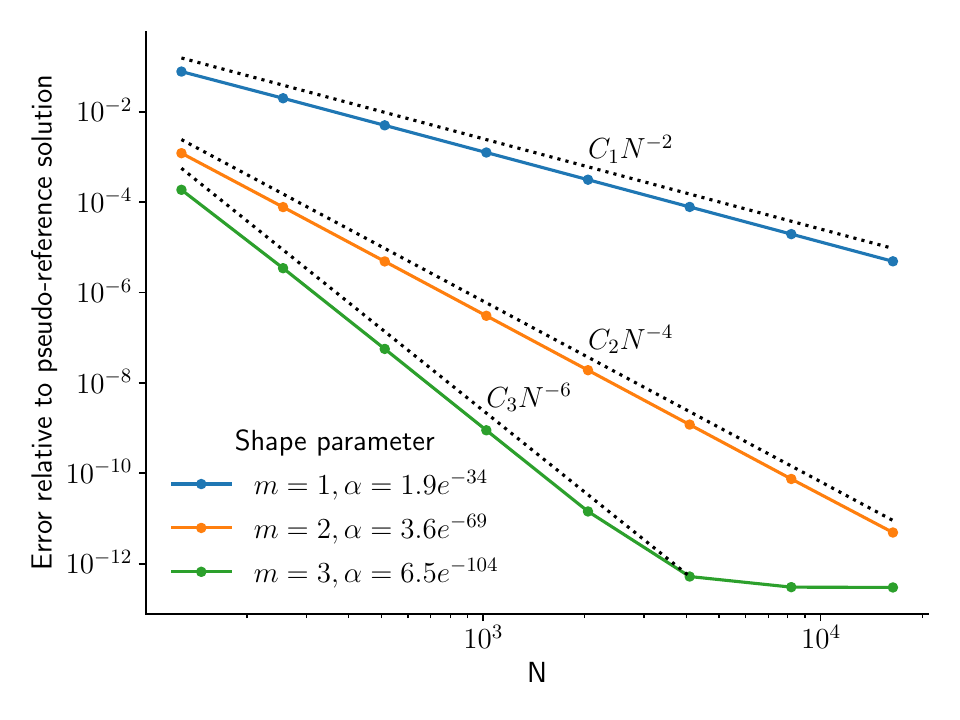}
  \includegraphics[width=0.49\linewidth]{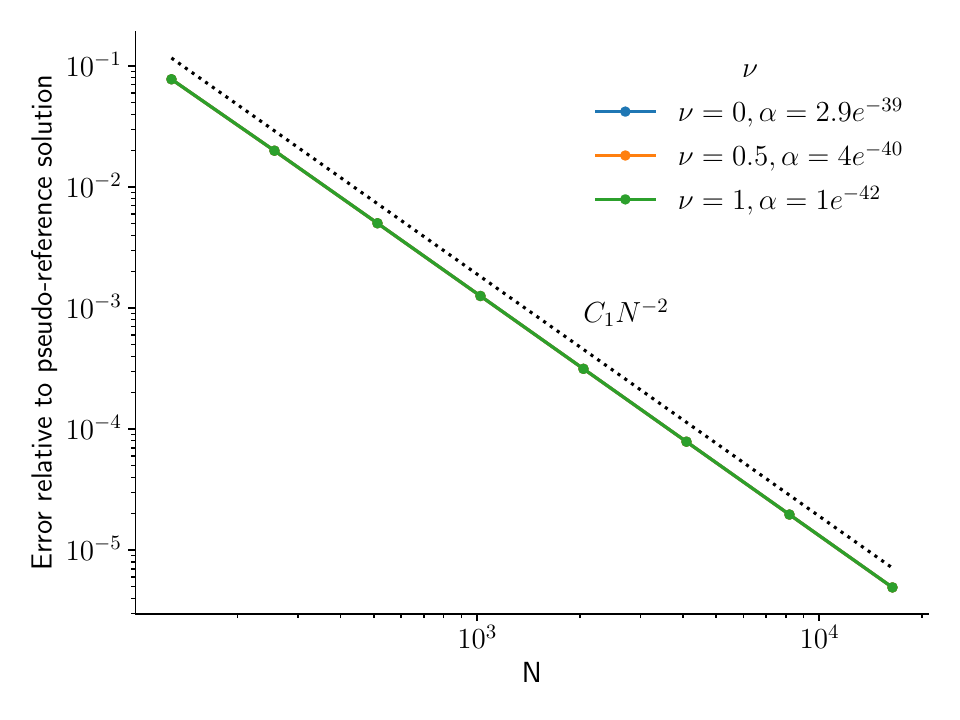}
  \caption{Left: Relative error as a function of the mesh-size when estimating $F_Y(0.8)$ where $Y = \sum_{i=1}^{16} X_i$ and $X_i \sim \text{Nakagami-m}(m)$. Right: Relative error as a function of the mesh-size when estimating $F_Y(0.8)$ where $Y = \sum_{i=1}^{16} X_i$ and $X_i \sim \text{Rice}(\nu)$}
  \label{fig:l_nakagami_r_rice}
\end{figure}

\section{Conclusion} \label{sec:conclusion}

We have presented a deterministic numerical method for estimating left-tail rare events of sums of non-negative independent RVs. The method is shown to be efficient, flexible, and accurate -- even when measured in relative error. This is due to the fact that numerical integration of convoluted densities only involves sums and products of non-negative floating point values, which are operations that are insensitive to rounding errors, cf.~Theorem~\ref{thm:errorWithRounding}. 
We further compare direct-based convolution 
to FFT-based convolution, and show by formal theoretical arguments and in numerical experiments that FFT-based convolution is more sensitive to rounding errors and a less reliable method when the magnitude of the probability of failure is sufficiently small. In the future, it would be interesting to explore whether ideas involving numerical integration of linear convolutions to could be extended to estimations of "left-tail" rare events for random vectors, and to right-tail rare events. 

\section*{Code availability}
The code used to run the numerical examples presented in this paper can be found at: \href{https://github.com/johannesvm/convolution-left-side-rare-events}{https://github.com/johannesvm/convolution-left-side-rare-events}

\bibliographystyle{amsplain}
\bibliography{References.bib}
\end{document}